\setlist{noitemsep}
\tikzset{interval/.style={arrows={|-|}, very thick}, every path/.style={draw, thick}}
\DeclarePairedDelimiter\card{\lvert}{\rvert}
\newcommand*{\define}[1]{\textbf{\textcolor{blue}{#1}}}
\newtheorem{theorem}{Theorem}
\newtheorem{lemma}{Lemma}
\newtheorem{corollary}{Corollary}
\theoremstyle{remark}
\newtheorem{remark}{Remark}
\newcommand*{\neighbour}[1]{\operatorname{N}(#1)}
\newcommand*{\child}[1]{\operatorname{N}^{+}(#1)}
\newcommand*{\parent}[1]{\operatorname{N}^{-}(#1)}
\newcommand*{\vertices}[1]{\operatorname{\mathsf V}(#1)}
\newcommand*{\edges}[1]{\operatorname{\mathsf E}(#1)}
\newcommand*{\treewidth}{\operatorname{tw}}
\newcommand*{\size}[1]{\lvert\lvert #1 \rvert\rvert}
\renewcommand*{\emptyset}{\varnothing}
\newcommand*{\units}{\mathbf S^1}
\newcommand*{\reals}{\mathbf R}
\newcommand*{\integers}{\mathbf Z}
\newcommand*{\powerset}[1]{\mathcal P(#1)}
\newcommand*{\disj}{\sqcup}
\newcommand*{\Disj}{\bigsqcup}
\newcommand*{\class}[1]{\ensuremath{\mathbf{#1}}}
\newcommand*{\sharpP}{\class{\#P}}
\newcommand*{\intervals}{\mathcal I}
\newcommand*{\variables}{\mathcal V}
\newcommand*{\clauses}{\mathcal C}
\newcommand*{\bigO}{\mathcal O}
\newblock   \iftoggle{bbx:eprint} {\usebibmacro{eprint}}     {}%
\newblock   \iftoggle{bbx:doi}     {\printfield{doi}}     {}}
\title{Counting Kernels in Directed Graphs\\with Arbitrary Orientations}
\author{\href{https://orcid.org/0000-0002-5341-1968}{Bruno Jartoux}\thanks{Ben-Gurion University of the Negev, Israel, \href{mailto:jartoux@post.bgu.ac.il}{jartoux@post.bgu.ac.il}.}}
\date{}
\begin{document}

\maketitle

\begin{abstract}
A kernel of a directed graph is a subset of vertices that is both independent and absorbing (every vertex not in the kernel has an out-neighbour in the kernel). 

Not all directed graphs contain kernels, and computing a kernel or deciding that none exist is \class{NP}-complete even on low-degree planar digraphs.
The existing polynomial-time algorithms for this problem are all obtained by restricting both the undirected structure and the edge orientations of the input: for example, to chordal graphs without bidirectional edges (Pass-Lanneau, Igarashi and Meunier, Discrete Appl Math 2020) or to permutation graphs where each clique has a sink (Abbas and Saoula, 4OR 2005). 

By contrast, we count the kernels of a fuzzy circular interval graph in polynomial time, regardless of its edge orientations, and return a kernel when one exists. (Fuzzy circular interval graphs were introduced by Chudnovsky and Seymour in their structure theorem for claw-free graphs.) 

We also consider kernels on cographs, where we establish \class{NP}-hardness in general but linear-time solvability on the subclass of threshold graphs.
\end{abstract}

\section{Introduction}

A \define{digraph} is a simple, loopless, finite graph plus a choice of orientation for each edge, which may be in either direction or in both. A \define{kernel}\footnote{No relation to the homonymous concept in parametrised complexity.} of a digraph is a subset of vertices that is independent and absorbing, that is, $k$ is a kernel iff the vertices with at least one outgoing edge to $k$ are exactly those not in $k$.

\paragraph{Background.} 
Kernels are commonly attributed to \textcite[\S 4.5.3 in][]{vonNeumann--Morgenstern} (as solutions of certain multi-player games), despite similar ideas in early 20th century proto-game theory (Bouton on Nim \cite{Bouton} and Zermelo on chess \cite{Zermelo}). Besides cooperative and combinatorial games \cite{vonNeumann--Morgenstern, Boros--Gurvich-cores, Fraenkel, Duchet--Meyniel-poison}, kernels make various appearances in logic (paradoxes, argumentation, default logic \cite{Dyrkolbotn,DimopoulosMP,Berge--Rao}) and combinatorics (the proof of the Dinitz conjecture \cite{Galvin}). In most of these applications, kernels model cyclic, self-referential behaviour and the complexity arising therefrom.

Kernels are inclusion-maximal independent sets, but not all digraphs have kernels: for example directed odd cycles do not. Given a digraph $D$ and the set $\ker D$ of its kernels, deciding whether $\ker D$ is nonempty, computing $\card{\ker D}$, or building some $k\in\ker D$ are natural computational problems--all difficult. The decision problem is already \class{NP}-complete, even when $D$ is planar with maximum degree 3 \cite{Chvátal,Fraenkel-hardness}, or when $D$ is the line graph of a bipartite graph (with arbitrary edge orientations) \cite{azizStableMatchingUncertain2022}. The search problem is therefore \class{FNP}-complete. Furthermore, the counting problem is \sharpP-complete \cite{SwarczfiterChaty1994}.
The literature on the existence and computation of kernels and related structures is too vast to relate here; recent summaries are found in \cites{Boros--Gurvich-cores}{Pass-LanneauIM}[\S 3.8 in][]{bang-jensenDigraphs2009}. The main insight is that interleaved directed cycles are the source of computational hardness. In fact, a simple inductive argument shows that every acyclic digraph admits one unique kernel.

\paragraph{Kernels in specific digraph classes.} Stable marriages à la Gale--Shapley \cite{Gale--Shapley,Gusfield--Irving} are in bijection with the kernels of a certain (suitably oriented) perfect line graph \cite{Maffray-line}. The remarkable lattice structure of the set of kernels in that setting has led to interest in other classes of claw-free or perfect graphs. In some, various conditions on edge orientations lead to polynomial-time search \cite[e.g.][]{Maffray-line,Abbas--Saoula,Walicki--Sjur-paperA,Pass-LanneauIM}. Forbidding bidirectional edges and forcing every clique to have a sink are the most common examples. (The precise meaning of \enquote{orientation} varies from author to author. In the present article or in \cite{Maffray-line}, it allows for an edge to be simultaneously oriented in both directions; in e.g.\@ \cite{Pass-LanneauIM} it does not.)

In contrast, we would like to solve kernel problems with restrictions on the undirected structure only---not on the edge orientations. To my knowledge, the sole results in this vein are tractability parametrised by tree-width (see \cref{cliquewidth}) and an exponential-time decision algorithm for general digraphs\footnote{The running time is $\bigO(1.427^n)$. A \enquote{naive} approach would generate and test all $\bigO(3^{n/3})$ maximal independent sets \cite{moonCliquesGraphs1965}: $3^{1/3}\simeq 1.44$.} \cite{Walicki--Sjur-paperA}.

For fuzzy circular interval graphs---a claw-free class which includes the proper interval graphs, see \cref{fig:circulararcinclusions} and the definition in \cref{sec:fcig}---we go beyond the decision problem:

\begin{theorem}\label{thm:fcig}
Given an $n$-vertex fuzzy circular interval graph $D$ with arbitrary edge orientations, 
we can compute the size of $\ker D$ and, if nonempty, return one of its elements in time polynomial\footnote{See \cref{runtimes} for a discussion of running times. They are $\bigO(\card{ D}^3\cdot \size D)$ on fuzzy circular interval graphs, $\bigO(\card{ D}^2\cdot \size D)$ on fuzzy linear interval graphs and $\bigO(\card{ D} \cdot \size D)$ on proper interval graphs.} in $n$. 
\end{theorem}

\begin{corollary}\label{main:decision}
The existence of kernels in fuzzy circular interval graphs with arbitrary edge orientations (including bidirectional edges) is in \class P.
\end{corollary}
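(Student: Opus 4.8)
}
I would prove \cref{thm:main} and then obtain \cref{main:decision} at once, by running the algorithm with $S=\emptyset$, $T=\vertices{D}$, $w\equiv 0$ and $t=0$ and testing whether the reported count is positive. The plan for the theorem is an interval-sweep dynamic program. Write $n=\card{D}$ and fix the data witnessing that $D$ is a fuzzy circular interval graph (see \cref{sec:fcig}): a cyclic order $v_0,\dots,v_{n-1}$ of $\vertices{D}$ given by their positions on $\units$, in which the closed neighbourhood of every vertex is a cyclic interval of the order, up to the fuzzy cliques at the endpoints of the defining intervals.

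The core is an order-theoretic description of $\ker D$. I would show that a kernel $k$ with $\card{k}\geq 2$ is exactly a cyclic list $v_{p_1},\dots,v_{p_s}$ of pairwise non-adjacent vertices with the following property for each consecutive pair $v_{p_j},v_{p_{j+1}}$: every vertex on the arc strictly between them is adjacent to $v_{p_j}$ or to $v_{p_{j+1}}$ and sends an out-edge to (at least) one of those two that it is adjacent to. The crux is that such an in-between vertex $v_l$ can be adjacent to \emph{no other} selected vertex $v_{p_i}$: if it were, the closed neighbourhood of $v_{p_i}$, being a cyclic interval that contains both $v_{p_i}$ and $v_l$, would be forced to contain $v_{p_j}$ or $v_{p_{j+1}}$ (or a still further selected vertex), making two selected vertices adjacent --- impossible, since a kernel is independent. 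Hence absorbing $v_l$ reduces to the stated local condition, leaving only the requirement that no gap contain an uncovered vertex. The cases $\card{k}\leq 1$ (the empty set only for the empty digraph; a singleton $\{v\}$ precisely when $v$ dominates and every other vertex has an out-edge to $v$) are handled directly. The one genuinely delicate point --- where I expect most of the effort --- is making this precise at the fuzzy cliques, where \enquote{closed neighbourhood $=$ interval} holds only modulo a bounded local clique, and at the point where the circle is cut.

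With the characterisation in hand the algorithm is routine. Cut the circle at a fixed point and, for each candidate $p_1$, count the kernels whose lowest-indexed selected vertex is exactly $v_{p_1}$, so that no kernel is counted twice. Fixing $v_{p_1}$, sweep clockwise with state equal to the identity of the last selected vertex $v_{p_j}$; a transition chooses the next selected vertex $v_{p_{j+1}}$ and is legal iff the pair is non-adjacent and the arc skipped over meets the coverage-and-absorption condition above --- costing $\bigO(\size{D})$ time to read the relevant edges --- and it multiplies the running count by $1$, since the skipped vertices are forced out of $k$; a last transition closes the loop back to $v_{p_1}$, checking the wrap-around arc the same way. There are $\bigO(n)$ states with $\bigO(n)$ transitions apiece, so $\bigO(n^2\cdot\size{D})$ per choice of $p_1$ and $\bigO(n^3\cdot\size{D})$ overall; on fuzzy \emph{linear} interval graphs the cut-guessing disappears, and on proper interval graphs the trivial fuzzy cliques remove a further factor, matching \cref{runtimes}.

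The remaining ingredients of \cref{thm:main} are cheap. The constraints $S\subset k\subset T$ only prune the sweep: a selected vertex must lie in $T$, a skipped vertex must lie outside $S$ (which, enforced throughout, is exactly $S\subset k$), with a few boundary checks near the cut. The weight target is absorbed by extending the state with the running sum of $w$ over the vertices selected so far, keeping the running time polynomial in $n$ and $\size{D}$ (the encoding of $w$ and $t$), and one reports the tally of kernels of total weight $t$. For the search version, carry a witnessing kernel alongside the counts, or reconstruct one by backtracking from any nonzero cell.
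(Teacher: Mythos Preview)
Your strategy is correct and is essentially the paper's: the locality claim is \cref{fencepost}/\cref{localkernel}, the sweep with state \enquote{last selected kernel vertex} is the recursion on $K(x,y)$ in \cref{flig}, the cut-and-anchor for the circular case is \cref{fcigtoflig}, and the $S,T,w,t$ add-ons are handled just as you describe. The \enquote{delicate point} you flag at the fuzzy cliques is exactly where the paper augments the one-vertex state to two consecutive kernel vertices whenever the newer one lies in $\delta(y)$ (the sets $K_0(x,m,y)$ and the three-vertex absorption check $\Lambda'$ in \cref{rec4}), since a vertex sharing position with the just-selected $v_{p_j}$ may be absorbed only by $v_{p_{j+1}}$ and hence cannot be certified by the transition into $v_{p_j}$ alone.
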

From \cref{thm:fcig} and the observation that the maximal independent sets of a graph are the kernels of its all-bidirectional orientation, we get:
\begin{corollary}\label{main:mis}
Counting the maximal independent sets of a fuzzy circular interval graph takes polynomial time.
\end{corollary}
Finally \cref{thm:main}, a technical extension of \cref{thm:fcig} to weighted graphs, yields:
\begin{corollary}\label{main:maxis}
Counting the maximum independent sets of a fuzzy circular interval graph takes polynomial time.
\end{corollary}
This contrasts with the situation in various other graph classes, including the (equally claw-free) line graphs, where maximum independent sets are $\sharpP$-complete (\cref{sec:mis}).

We devote \cref{sec:main} to proving \cref{thm:fcig,thm:main} and \cref{sec:remarks} to remarks on approaches parametrised by graph complexity measures, on maximal independent sets and \sharpP-hardness, and on kernels in cographs and threshold graphs. 

A natural question which we leave open is the complexity of the search problem on circular arc graphs, or even only on interval graphs: the case without bidirectional edges is polynomial \cite{Pass-LanneauIM}.

\paragraph{Notation and terminology.} We use standard (di)graph vocabulary (\enquote{independent set}, \enquote{clique}, \enquote{chordal}, etc.) which any recent textbook will cover.

The vertex set and edge set of a digraph $D$ are $\vertices D$ and $\edges D$. We write $\card D \coloneqq \card{\vertices D}$ and $\size D \coloneqq \card{\vertices D} + \card{\edges D}$.

The in- and out-neighbours of a vertex $x$ form the sets $\parent x$ and $\child x$.  Also let $\neighbour x \coloneqq\parent x \cup\child x$. Bidirectional edges may lead to $\parent x \cap\child x \neq\emptyset$. The underlying digraph will always be clear from the context. 
For a set of vertices we also write $\parent S \coloneqq \cup_{x\in S} \parent x$, and so on. Thus a subset of vertices $k$ is a kernel iff $\parent k = \vertices D \setminus k$.

We say that a set of vertices $A$ \define{absorbs} a set of vertices $B$ if $B \subset (A \cup \parent A)$. 

If $S$ is a subset of vertices, $D[S]$ is the sub-digraph induced by $S$. 

Finally, we use $\disj$ in lieu of $\cup$ to denote a union of pairwise disjoint sets.

\section{Counting kernels in fuzzy circular interval graphs}
\label{sec:main}

In (undirected) claw-free graphs, the independent sets have remarkable properties allowing efficient algorithms for \emph{maximum independent set} and related problems \cite{FaenzaOS,hermelinDominationStarsOut2019}. This makes them a popular graph class for algorithmic applications.

In a series of seven papers and a survey, \textcite{Chudnovsky--Seymour-survey} gave a structure theorem for claw-free graphs. Disregarding subclasses with bounded independence number (amenable to brute force as far as kernels are concerned), the \define{quasi-line} graphs are an important stepping stone in this decomposition. They are the graphs in which the neighbourhood of every vertex is the union of two cliques. In turn, connected quasi-line graphs are either \emph{fuzzy circular interval graphs} (see below), or generalised line graphs \cites[\nopp 1.1]{Chudnovsky--Seymour-quasi-line}[\nopp 2.2]{Chudnovsky--Seymour-survey}. Since the existence of kernels is \class{NP}-complete on arbitrarily-oriented line graphs \cite{azizStableMatchingUncertain2022}, we focus on the first class. See \cref{fig:circulararcinclusions} for a diagram depicting the inclusions between all these classes.

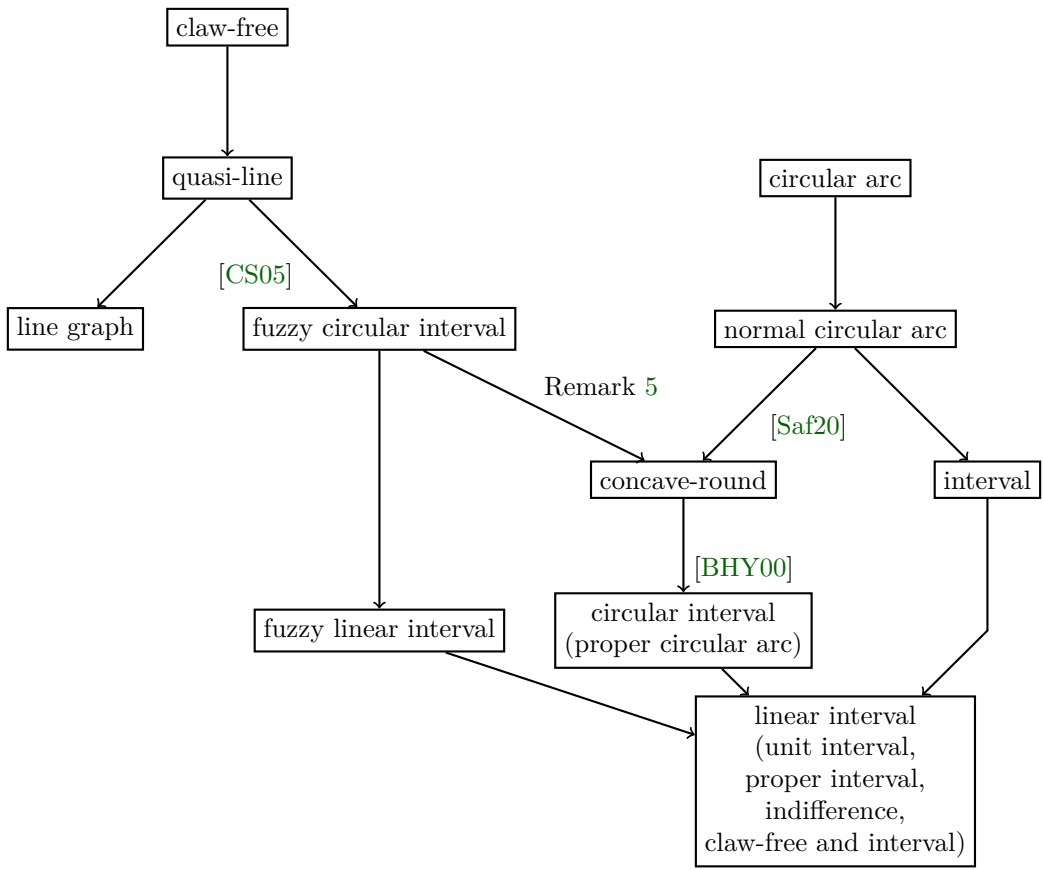
\begin{figure}
    \centering
    \tikzsetnextfilename{inclusions}
    \ifluatex \tikzexternaldisable \else \tikzset{external/remake next=false}  \fi
    \begin{tikzpicture}[align=center]
        \graph [layered layout, level distance=2cm, nodes={draw, rectangle}, sibling distance=40mm] {
        "claw-free" -> "quasi-line" -> "line graph";
        "quasi-line" ->["\cite{Chudnovsky--Seymour-survey}"'] "fuzzy circular interval" -> "fuzzy linear interval" -> "linear interval\\(unit interval,\\proper interval,\\indifference,\\claw-free and interval)";
        "fuzzy circular interval" ->["\Cref{concaveround}"] "concave-round"  ->["\cite{Bang-JensenHY}" near end] "circular interval\\(proper circular arc)"   -> "linear interval\\(unit interval,\\proper interval,\\indifference,\\claw-free and interval)";
        "circular arc" -> "normal circular arc" ->["\cite{safeCharacterizationLineartimeDetection2020}"] "concave-round";
        "interval" -> "linear interval\\(unit interval,\\proper interval,\\indifference,\\claw-free and interval)";
        "normal circular arc" -> "interval";
    };
    \end{tikzpicture}
    \caption{Proper inclusions among some graph classes. Inclusions without a reference are well-known or obvious.
    \Cref{fig:FLIGexample} shows a fuzzy linear interval graph which is not circular arc, and the claw $K_{1,3}$ itself is an interval graph that is not claw-free. The Information System on Graph Classes and their Inclusions is a helpful resource \cite{InformationSystemGraph}.}
    \label{fig:circulararcinclusions}
\end{figure}
\ifluatex \tikzexternalenable \fi

\subsection{Fuzzy circular interval graphs and subclasses}
\label{sec:fcig}

For every two points $a$ and $b$ of the unit circle $\units \subset \reals^2$, define $[a;b]$ as the closed interval of $\units$ joining $a$ to $b$ anticlockwise, with $[a;a]\coloneqq\{a\}$. Also let $[a;b)\coloneqq[a;b]\setminus \{b\}$, and so on.

Following \textcite{Chudnovsky--Seymour-survey}, a graph $G$ is a \define{fuzzy circular interval graph} (FCIG) if it has a model consisting of a function $f\colon \vertices G\to \units$ and a set $\intervals$ of intervals of $\units$ of the form $[a;b]$ with $a\neq b$ such that: 
\begin{enumerate}
    \item no two intervals in $\intervals$ share an endpoint; \label{distinct_endpoints}
    \item there are no proper inclusions among members of $\intervals$;
    \item for every two adjacent vertices $x$ and $y$ there is some interval $I\in \intervals$ such that $f(x),f(y)\in  I$;
    \item for every two non-adjacent vertices $x$ and $y$ and interval $I\in\intervals$, if $f(x),f(y)\in I$ then the endpoints of $I$ are $f(x)$ and $f(y)$. 
\end{enumerate}

\begin{remark}
The model does not fully specify the adjacency relation between vertices it sends to distinct endpoints of the same interval (hence \enquote{fuzzy}). Two FCIGs may admit the same model. For example, consider the vertex set $\{1,2,3,4\}$, and a model having $f(1)=f(2)=a$ and $f(3)=f(4)=b$ with $a,b\in \units$ distinct and $\intervals=\{[a;b]\}$. This model describes several graphs: a cycle, a path, a clique, and the union of two edges.
\end{remark}

The graph is a \define{fuzzy linear interval graph} (FLIG), resp.\@ \define{circular interval graph} (CIG), if it has an FCIG model with $\cup_{J\in \intervals} J \neq \units$, resp.\@ $f$ injective, and a \define{linear interval graph} (LIG) when it has a model where both hold. In other contexts, CIGs are also known as proper circular arc graphs and LIGs as indifference graphs, unit interval graphs or proper interval graphs. See \cref{fig:circulararcinclusions} for some proper inclusions between these and other graph classes. Note that we could equally well define FLIGs by the same conditions as above, except that the codomain of $f$ is the real interval $[0;1]$.

Those four graph classes are each closed under taking induced subgraphs: if $S\subset \vertices G$, the pair $(f\vert_{S}:x\in S \mapsto f(x), \intervals)$ is a model of the subgraph induced by $S$.

\begin{remark}
It is already known that kernel search takes polynomial time on LIGs in which every clique has a sink because LIGs are (strongly) chordal and on CIGs without bidirectional edges because CIGs are circular-arc \cite{Pass-LanneauIM}.
\end{remark}

\paragraph{Nice models.}
Let $G$ be an FCIG (resp.\@ FLIG). An FCIG (FLIG) model $(f\colon \vertices G \to \units,\intervals)$ for $G$ (resp.\@ $f\colon \vertices G \to [0;1])$ is \define{nice} when (i) $\card\intervals \leq \card G$ and (ii), for any two distinct vertices $x$ and $y$, the equality $f(x)=f(y)$ implies that $x$ and $y$ are adjacent.
A nice model can always be computed in time $\bigO(\card{ G}^2\cdot \size G)$. Furthermore, every model in which $f$ already satisfies (ii) can be made nice by deleting some intervals while keeping the same $f$ \cite{OrioloPS}.

\begin{figure}[h]
    \centering
    \tikzsetnextfilename{graph} 
    \ifluatex \tikzset{external/remake next}  \else \tikzset{external/remake next=false} \fi
    \begin{tikzpicture}[every node/.style={circle, draw, double, text height=5pt},every edge/.style={draw, thick}]
        \graph [spring layout] {
            a -- b -- c -- d -- a; c--e; d--e --[orient=0] f};
    \end{tikzpicture}
    \qquad
    \tikzset{external/export next=false}
    \begin{tikzpicture}[every node/.style={below=5pt, align=flush center}]
        \draw[dotted, ->] (0,0)-- (1,0) node {$a$\\$b$} -- (2,0) node {$c$\\$d$} --  (3,0) node {$e$} -- (4,0) node {$f$} -- (5,0);
        \foreach \x in {1,...,4}
            \draw (\x,-.1) -- (\x,0.1);
        \draw[interval] (1,0.5) -- (2,0.5); 
        \draw[interval] (1.5,1) -- (3.5,1);
        \draw[interval] (2.5, 0.5) -- (4.5,0.5);
    \end{tikzpicture}
    \caption{A graph and a nice FLIG model. The reader can check that this graph is not an LIG, nor more generally a circular arc graph. Hint: the disjoint union of a 4-cycle and a vertex is not a circular arc graph.}
    \label{fig:FLIGexample}
    
\end{figure}
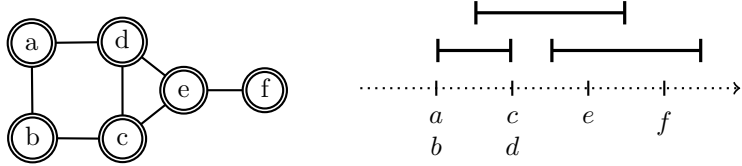

\paragraph{Roadmap.}
In the remainder of this section, we prove \cref{thm:main} in three steps: we obtain a pared-down version for FLIGs via dynamic programming, we extend it to FCIGs, then we discuss how some additional kernel features may be tracked.

\subsection{The structure of kernels in fuzzy linear interval graphs}
\label{flig}

We start with some results about the structure of FLIGs and their kernels. In all that follows, fix an FLIG $D$ and a nice model $(f\colon \vertices D \to [0;1],\intervals)$. We will build a weak ordering of $\vertices D$ whose restriction to each independent set (in particular, to each kernel) is a total order. Intuitively, we will then construct kernels of $D$ by \enquote{guessing} their vertices one after the other \enquote{from right to left}.

\paragraph{Weak vertex ordering.}
 The model gives $\vertices D$ a weak ordering (a \enquote{total order with ties}): for every two vertices $x$ and $y$ put $x\prec y$ iff $f(x)<f(y)$. Since the model is nice, if $x$ and $y$ are non-adjacent then $f(x)\neq f(y)$, so either $x\prec y$ or $y\prec x$. Thus every independent set (in particular every kernel) is totally ordered by $\prec$, even though $\prec$ is in general not total on $\vertices D$. 
 
 Further write $x\preceq y$ iff $y\not\prec x$ (equivalently, iff $f(x)\leq f(y)$). Now we may talk about the vertex intervals 
 \begin{align*}
     [x;y] \coloneqq \{z\in \vertices D \colon x \preceq z \preceq y\},\\
     [x;y) \coloneqq \{z\in \vertices D \colon x \preceq z \prec y\},
 \end{align*}
 and so forth. In particular $[x]\coloneqq[x;x]$ is a clique containing not only $x$ but all vertices mapped to $f(x)$.

For every vertex $x$, the uniqueness of interval endpoints means that there is at most one interval $[a;f(x)]\in \intervals$.  We set $\delta(x)\coloneqq f^{-1}(a)$ if it exists and $\delta(x)\coloneqq\emptyset$ otherwise. When nonempty, $\delta(x)$ is a clique that may contain both neighbours and non-neighbours of $x$. Since two intervals in $\intervals$ may not share an endpoint,
\begin{equation}
    \forall x \in \vertices D, \quad \forall y\in \delta(x), \quad \delta(y)=\emptyset. \label{boundaries}
\end{equation}
    
For every two vertices $x$ and $y$ let the set $K(x,y)$ comprise those kernels of the induced subgraph $D[[x;y]]$ that contain $x$ and $y$:
\[ K(x,y) \coloneqq \{k \in \ker D[[x;y]]\colon x,y\in k\}. \]

In this section we focus on the sets $K(x,y)$, for which we give a recursive description. We motivate this choice by observing that, while in general $K(x,y) \not\subset \ker D$, we do have 
\[ \ker D \subset \Disj_{x,y\in \vertices D} K(x,y), \]
since a kernel with leftmost and rightmost vertices $x$ and $y$ belongs to $K(x,y)$.

\paragraph{Preliminary facts.} Let us establish three properties of $\prec$. These will also hold for induced subgraphs of $D$ when equipped with the same weak order.

\begin{lemma}\label{fencepost}
If $a, b \in \vertices D$ are not adjacent, $a \prec b$  and $c\in\neighbour a$ then $c \preceq b$ (resp.\@ $c\in \neighbour b$ and $a\preceq c$). 
\end{lemma}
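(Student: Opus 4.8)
The plan is to prove both mirror statements by the same short local argument, using only the third condition (adjacent vertices share an interval) and the fourth condition (endpoint-pinning) of the model; niceness plays no role. Throughout I work with the codomain $[0;1]$ of the FLIG model, so the members of $\intervals$ are genuine closed real subintervals, $\prec$ is the usual order on $f$-values, and \enquote{lying between two points of an interval} is unambiguous; recall that $x\preceq y$ means $f(x)\le f(y)$.

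For the first statement, fix the non-adjacent pair with $a\prec b$ and a neighbour $c\in\neighbour a$; I want $c\preceq b$. Suppose not, so $f(a)<f(b)<f(c)$. Since $a$ and $c$ are adjacent, the third condition provides an interval $I\in\intervals$ with $f(a),f(c)\in I$, and then $f(b)$, being strictly between $f(a)$ and $f(c)$, also lies in $I$. Now the fourth condition, applied to the non-adjacent pair $a,b$ and this $I$, forces the two endpoints of $I$ to be exactly $f(a)$ and $f(b)$; as $f(a)<f(b)$ the larger endpoint is $f(b)$, so every point of $I$---in particular $f(c)$---is at most $f(b)$. This contradicts $f(b)<f(c)$, hence $c\preceq b$.

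The second statement follows by symmetry: the reflection $t\mapsto 1-t$ of $[0;1]$ sends a (nice) FLIG model to a (nice) FLIG model, since all four conditions are preserved and intervals map to intervals, while reversing $\prec$; under it the second statement for $D$ is precisely the first statement for the reflected model. Equivalently one repeats the argument directly: from $c\in\neighbour b$ and $f(c)<f(a)<f(b)$ one gets $I\ni f(b),f(c)$ by the third condition, hence $f(a)\in I$, hence the endpoints of $I$ are $f(a),f(b)$ by the fourth, so $f(c)\ge f(a)$, a contradiction. Finally, the assertion in the text that these properties survive passage to induced subgraphs with the inherited order is immediate, as restricting $f$ to $S$ while keeping $\intervals$ is a model of $D[S]$.

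I do not anticipate a real obstacle here: the only thing to watch is that the fourth (endpoint-pinning) condition is invoked in the correct direction---it is exactly what prevents $I$ from reaching past $b$---and, if one opts for symmetry rather than a verbatim repeat, the harmless observation that FLIG models are closed under left--right reflection.
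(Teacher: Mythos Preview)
Your proof is correct and follows the same approach as the paper: obtain an interval $I\in\intervals$ containing $f(a)$ and $f(c)$ from the third model condition, observe that $f(b)$ would then lie in $I$, and derive a contradiction from the fourth (endpoint-pinning) condition. The paper phrases the last step as \enquote{$f(b)$ is not in the interior of $I$}, whereas you extract the endpoints explicitly, but this is the same argument with a bit more detail.
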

\begin{proof}
Let $c\in \neighbour a$: there exists an interval $I\in\intervals$ with $f(a), f(c) \in I$. But then $f(b)$ is not in the interior of $I$, as this would cause $a$ and $b$ to be adjacent. Hence $c \preceq b$. The other inclusion is similar.
\end{proof}

\begin{lemma}\label{localkernel}
If $k\in \ker D$ and $a,b\in k$ with $a\prec b$, then $k\cap [a;b]$ absorbs $(a;b)$.
\end{lemma}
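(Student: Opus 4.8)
The plan is to check the absorption condition one vertex at a time: I will show that every $z$ with $a\prec z\prec b$ either lies in $k\cap[a;b]$ or has an out-neighbour there. If $z\in k$, then $a\prec z\prec b$ gives $a\preceq z\preceq b$, so $z\in[a;b]$ and hence $z\in k\cap[a;b]$, and there is nothing more to prove. So the interesting case is $z\notin k$. Since $k$ is a kernel, $\parent k=\vertices D\setminus k$, so $z$ has at least one out-neighbour $w\in k$. Fixing such a $w$, it now suffices to prove $w\in[a;b]$, i.e.\ $a\preceq w\preceq b$: then $w\in k\cap[a;b]$ absorbs $z$.

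Both inequalities will come from \cref{fencepost}, and the crucial observation is that $w$, $a$ and $b$ all belong to the independent set $k$, so $w$ is non-adjacent to both $a$ and $b$. Suppose first, for contradiction, that $w\prec a$. Then $\{w,a\}$ is a non-adjacent pair with $w\prec a$, and $z\in\neighbour w$ (because $z\to w$), so \cref{fencepost} forces $z\preceq a$, contradicting $a\prec z$. Suppose instead that $b\prec w$. Then $\{b,w\}$ is a non-adjacent pair with $b\prec w$, and again $z\in\neighbour w$, so the symmetric half of \cref{fencepost} forces $b\preceq z$, contradicting $z\prec b$. Hence $a\preceq w\preceq b$, which finishes the vertex-by-vertex argument and the lemma.

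This is essentially a bookkeeping exercise once \cref{fencepost} is available; the only points that require care are to invoke the kernel property in order to produce the out-neighbour $w$ in the first place, and — critically — to use that $w\in k$, which is exactly what licenses the non-adjacency hypotheses of \cref{fencepost} against $a$ and against $b$. I do not expect any substantial obstacle beyond getting the orientation of the inequalities in \cref{fencepost} right.
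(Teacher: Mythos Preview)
Your proof is correct and follows the same approach as the paper's: pick $z\in(a;b)\setminus k$, use the kernel property to get an out-neighbour $w\in k$, and then apply \cref{fencepost} (licensed by the independence of $k$) to rule out $w\prec a$ and $b\prec w$. The paper's version is merely terser, phrasing the conclusion of \cref{fencepost} as $\neighbour w\cap(a;b)=\emptyset$ rather than as the inequality $a\preceq w\preceq b$.
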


\begin{proof}
Since $k$ is a kernel, every vertex $c\in (a;b)\setminus k$ must have an out-neighbour $d\in k$. If $d \prec a$ or $b \prec d$ then \cref{fencepost} yields $\neighbour d\cap(a;b)=\emptyset$, which contradicts $c\in \neighbour d$. Hence $d \in [a;b]$.
\end{proof}

Intuitively, \cref{localkernel} describes the local structure of kernels in $D$: consecutive kernel vertices are non-adjacent and absorb the open vertex interval which they delimit.

\begin{lemma}\label{secondrightmost}
If $x$ and $y$ are distinct vertices, $k\in K(x,y)$ and $m\in k\setminus\{x,y\}$, then $k\cap [m; y]$ absorbs $(m; y]$, resp.\@ $k\cap [x; m]$ and $[x;m)$.
\end{lemma}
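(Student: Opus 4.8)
The plan is to reduce the statement to the two preceding lemmas with essentially no extra work, using that \cref{fencepost} and \cref{localkernel} apply verbatim to the induced subgraph $D[[x;y]]$ with the restriction of $\prec$. First note that, since $k$ is an independent set containing the three vertices $x$, $m$ and $y$, it is totally ordered by $\prec$; and niceness of the model forbids two $\prec$-incomparable (hence equal-$f$-value) vertices in an independent set, so in fact $x \prec m \prec y$.

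For the first assertion, fix $c \in (m;y]$; we must show $c \in k\cap[m;y]$ or that $c$ has an out-neighbour there. If $c\in k$ we are done, as $(m;y]\subset[m;y]$. Otherwise $c$ is a vertex of $D[[x;y]]$ outside the kernel $k$, so it has an out-neighbour $d\in k$, and $d\in[x;y]$. It remains to see $d\in[m;y]$: we have $d\preceq y$ already, and if $d\prec m$ then $d$ and $m$ would be two non-adjacent vertices of $k$ with $d\prec m$ and $c\in\neighbour d$, so \cref{fencepost} would force $c\preceq m$, contradicting $m\prec c$. Hence $m\preceq d\preceq y$, i.e.\ $d\in k\cap[m;y]$ absorbs $c$. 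The second assertion ($k\cap[x;m]$ absorbs $[x;m)$) follows by the mirror-image argument, invoking the parenthesised \enquote{resp.}\ forms of \cref{fencepost}.

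There is no genuine obstacle here; the only subtlety is the half-open endpoint of $(m;y]$. Citing \cref{localkernel} only gives that $k\cap[m;y]$ absorbs the \emph{open} interval $(m;y)$, leaving the vertices sent to $f(y)$ (the clique $[y]$), which need not be absorbed by $y$ itself when the relevant edges point away from $y$; running the argument directly on the closed-on-the-right interval, as above, disposes of this case uniformly.
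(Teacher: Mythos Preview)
Your proof is correct and follows essentially the same route as the paper's: both arguments hinge on \cref{fencepost} to rule out an out-neighbour $d\prec m$, and the symmetric half is dismissed as such in both. The only cosmetic difference is that the paper first invokes \cref{localkernel} to dispose of the open interval $(m;y)$ and then treats the clique $[y]$ separately, whereas you run the \cref{fencepost} argument uniformly over all of $(m;y]$---which, as you note yourself, amounts to absorbing the proof of \cref{localkernel} inline.
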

\begin{proof}
From \cref{localkernel} applied to $D[[x;y]]$ (with $a=m$ and $b=y$), $k\cap [m; y]$ absorbs $(m; y)$. The sole remaining case is that of a vertex $z \in [y]\setminus k$. Since $k$ is a kernel, $z$ has an out-neighbour $w\in k$. If $w\in [m; y]$ we are done. Otherwise, it must be that $w \prec  m$, but then \cref{fencepost} (with $a = w$, $b=m$ and $c=z$) gives $z\preceq m$, which contradicts $z\in [y]$.
The second result is symmetrical.
\end{proof}

\paragraph{Recursive description of the kernels.}
Now we may describe the sets of the form $K(x,y)$ with $x\prec y$. Such sets are obtained by \enquote{guessing} their members in decreasing $\prec$-order, ensuring at each step that the newly selected vertex satisfies some compatibility conditions. The main source of complications is the possibility of choosing the next vertex from the set $\delta(y)$ after selecting $y\in \vertices D$. 

To address this issue, we sort the kernels in $K(x,y)$ into $\card{\delta(y)}+1$ disjoint subsets depending on their intersection with $\delta(y)$:
\begin{align}
K_0(x,\emptyset, y) &\coloneqq \{k \in K(x,y) \colon k \cap \delta(y) = \emptyset\},
\intertext{and, for every $m\in\delta(y)$,}
K_0(x,m,y)&\coloneqq \{k \in K(x,y) \colon m\in k\}.
\end{align}

Next, we introduce the sets from which kernel vertices will be selected at each step. For every two vertices $x\prec y$, let 
\begin{align}
\Lambda(x,y) &\coloneqq \{m\in (x;y)\setminus (\neighbour {\{x, y\}} \cup \delta(y))\colon \text{$\{m,y\}$ absorbs $(m;y]$}\}, \label{lambda}
\intertext{and, for every $m \in\delta(y) \setminus \neighbour {\{x, y\}}$,}
\Lambda'(x,m,y) &\coloneqq \{w\in (x;m)\setminus \neighbour {\{x, m\}} \colon \text{$\{w,m,y\}$ absorbs $(w;y]$}\}.\label{lambdaprime}
\end{align}

\begin{remark}
The functions $\Lambda$ and $\Lambda'$ take $\bigO(\card{D}^3)$ values, which we may compute in polynomial time.
\end{remark}

Recall that $\disj$ indicates disjoint unions.

\begin{lemma}
If $x \prec y$ are non-adjacent vertices and $\{x,y\}$ does not absorb $[x;y]$ then
\begin{align}
    K(x,y) &= \Disj_{m\in(\delta(y)\setminus \neighbour {\{x, y\}} \disj\{\emptyset\}
    } K_0(x,m,y), \label{rec1} &\\
    K_0(x,\emptyset,y) &= \hspace{0.9cm} \Disj_{w\in \Lambda(x,y)} \{ k \disj \{y\}\colon k\in K(x,w)\}, &\label{rec2} 
    \shortintertext{and for every $m \in\delta(y) \setminus \neighbour {\{x, y\}}$,}
    K_0(x,m,y) &= 
    \begin{dcases*}
    \hspace{1cm}\left\{\{x,m,y\}\right\} &if $\{x,m,y\}$ absorbs $[x;y]$,\\ 
    \hspace{0.35cm}\Disj_{w\in \Lambda'(x,m, y)} \{ k \disj \{m,y\}\colon  k\in K(x,w)\} &otherwise.
    \end{dcases*} \label{rec4}
\end{align}
Finally if $x\prec y$ and $\{x,y\}$ does absorb $[x;y]$ then $K(x,y)=\{\{x,y\}\}$.
\end{lemma}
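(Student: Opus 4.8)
The plan is to settle the final (base-case) identity first and then establish \eqref{rec1}, \eqref{rec2} and \eqref{rec4} in turn, each by a pair of inclusions. I may assume throughout that $x$ and $y$ are non-adjacent: otherwise no independent set contains both, so $K(x,y)$ and all the $K_0(x,\cdot,y)$ are empty and every displayed identity reduces to $\emptyset=\emptyset$. If $\{x,y\}$ absorbs $[x;y]$ then $\{x,y\}$ is independent and absorbing in $D[[x;y]]$, hence a kernel, so it lies in $K(x,y)$; and any $k\in K(x,y)$ with a third vertex $v\in(x;y)$ would have $v\in\neighbour{\{x,y\}}$ (since $\{x,y\}$ absorbs $v$), contradicting independence of $k$. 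Thus $K(x,y)=\{\{x,y\}\}$, which proves the last claim and, applied to $\{x,m,y\}$ in place of $\{x,y\}$, the first case of \eqref{rec4}. Equation \eqref{rec1} is then bookkeeping: $\delta(y)$ is a clique, so every $k\in K(x,y)$ meets it in at most one vertex; sorting kernels by $k\cap\delta(y)$ yields the disjoint union over $\{\emptyset\}\disj\delta(y)$, and the index values in $\delta(y)\cap\neighbour{\{x,y\}}$ contribute nothing, as such a vertex cannot lie in an independent set with both $x$ and $y$.

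For \eqref{rec2}, inclusion $\subseteq$: take $k\in K_0(x,\emptyset,y)$. The standing hypothesis that $\{x,y\}$ does not absorb $[x;y]$ rules out $k=\{x,y\}$, so $k$ has a $\prec$-largest vertex $w$ strictly below $y$, with $x\prec w\prec y$; since $k\cap[y]=\{y\}$ and no vertex of $k$ lies in $(w;y)$, we get $k\cap[w;y]=\{w,y\}$. By \cref{secondrightmost} applied to $D[[x;y]]$, $\{w,y\}$ absorbs $(w;y]$, and $w\notin\neighbour{\{x,y\}}\cup\delta(y)$ because $w\in k\in K_0(x,\emptyset,y)$; hence $w\in\Lambda(x,y)$. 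It remains to check $k\setminus\{y\}\in K(x,w)$: independence is inherited, and for every $z\in[x;w]\setminus k$ the out-neighbour $d\in k$ furnished in $D[[x;y]]$ lies in $[x;w]$ — whence $d\in k\setminus\{y\}$, as wanted — or else $d=y$. In the latter case \cref{fencepost} forces $f(z)=f(w)$, so the nice model gives $z\in\neighbour w$; but the edge $z\sim y$ is witnessed by some interval containing $f(z)=f(w)$ and $f(y)$, while $w\not\sim y$, so fuzzy condition~4 makes $f(w)$ the left endpoint of that interval, i.e.\ $w\in\delta(y)$, contradicting $k\cap\delta(y)=\emptyset$.

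For \eqref{rec2}, inclusion $\supseteq$: given $w\in\Lambda(x,y)$ and $k\in K(x,w)$, first note $k\cap\delta(y)=\emptyset$, since a vertex $m'\in k\cap\delta(y)$ is non-adjacent to $w\in k$ while $f(m')$ is the left endpoint of an interval $[a;f(y)]\in\intervals$ also containing $f(w)$ (as $a=f(m')\le f(w)<f(y)$), so fuzzy condition~4 would force $f(w)\in\{a,f(y)\}$ — impossible because $w\notin\delta(y)$ and $w\notin\neighbour y$. Then $k\disj\{y\}$ is independent (no vertex of $k\subset[x;w]$ is adjacent to $y$, again by \cref{fencepost}) and absorbs $[x;y]=[x;w]\disj(w;y]$: vertices of $[x;w]$ by the kernel property of $k$ in $D[[x;w]]$, vertices of $(w;y]$ because $\{w,y\}$ absorbs them and $w,y\in k\disj\{y\}$; so $k\disj\{y\}\in K_0(x,\emptyset,y)$. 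Equation \eqref{rec4}'s second case is the same argument with the pair $\{m,y\}$ replacing $\{y\}$: one additionally uses $k\cap(m;y)=\emptyset$ (fuzzy condition~4, since $f(m)$ is an endpoint of $[a;f(y)]$) so that the $\prec$-largest vertex $w$ of $k$ below $m$ gives $k\cap[w;y]=\{w,m,y\}$, and the stray case "$d=m$ or $d=y$" is excluded via $\delta(m)=\emptyset$ (by \eqref{boundaries}, as $m\in\delta(y)$) together with $\card{k\cap\delta(y)}\le1$. Finally each disjoint union on a right-hand side is genuinely disjoint, because the top vertex $w$ is recovered as the $\prec$-maximum of the relevant $K(x,w)$-summand.

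I expect the only real difficulty to be the absorption-preservation step in the $\subseteq$ directions of \eqref{rec2} and \eqref{rec4}: after deleting the top vertex (or the top two vertices) of a kernel, some $z\in[w]$ might have been absorbed in the larger kernel only through $y$ (or through $m$); ruling this out is exactly where fuzzy condition~4 and the endpoint-disjointness property \eqref{boundaries} of $\delta$ are indispensable, and it is the one place the argument goes beyond routine manipulation of the weak order $\prec$.
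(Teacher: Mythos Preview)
Your proof is correct and follows essentially the same route as the paper's: partition $K(x,y)$ by $k\cap\delta(y)$, then for each case peel off the top one or two kernel vertices and use \cref{secondrightmost} together with fuzzy condition~4 and property~\eqref{boundaries} to verify that the remainder lies in the appropriate $K(x,w)$. You are in fact slightly more thorough than the paper in two places: you explicitly verify $(k\sqcup\{y\})\cap\delta(y)=\emptyset$ in the reverse inclusion of \eqref{rec2}, and you explicitly argue $k\cap(m;y)=\emptyset$ at the start of \eqref{rec4}, both of which the paper leaves implicit.
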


\begin{proof}
\Cref{rec1} expresses the fact that a kernel in $K(x,y)$ has at least three vertices, since by hypothesis $\{x,y\}$ is not itself a kernel. The disjunction is on whether the second-rightmost kernel vertex belongs to the clique $\delta(y)$, and in the affirmative on the unique kernel vertex in $\delta(y)$.

Let us prove \cref{rec2}. It is a disjunction with respect to the second-rightmost vertex in the kernel.

For the direct inclusion, 
let $k\in K_0(x,\emptyset,y)$ and let $w$ be the second-rightmost vertex of $k$. 
\begin{itemize}
    \item Observe that $w\notin \neighbour {\{x, y\}}$ since $k$ is an independent set, that $w\notin\delta(y)$ (by hypothesis), and that $\{w,y\}$ absorbs $(w;y]$ by \cref{secondrightmost}. Hence $w\in\Lambda(x,y)$.
    \item It remains to argue that $k\setminus \{y\} \in K(x, w)$. Its independence and the inclusions $\{x, w\}\subset k\setminus\{y\}\subset[x; w]$ are clear. By \cref{secondrightmost}, $k\setminus\{y\}$ absorbs $[x;w)$. We have only to show that every $z\in[w]\setminus \{w\}$ has an out-neighbour in $k\setminus \{y\}$. 
We do know that $z$ has an out-neighbour in $k$, which suffices unless this out-neighbour is $y$.  But then some interval $I\in\intervals$ would contain both $f(w)=f(z)$ and $f(y)$. Because $w$ and $y$ are not adjacent, the endpoints of $I$ would be $f(w)=f(z)$ and $f(y)$, which contradicts $w\notin \delta(y)$.
\end{itemize}

For the reverse inclusion, fix $w\in\Lambda(x,y)$ and $k\in K(x,w)$ and let us show that $k\disj\{y\}\in K_0(x,\emptyset,y)$.
The set $k\disj\{y\}$ is independent: by \cref{fencepost} the only possible exception would be an edge joining $w$ and $y$, but $w\in\Lambda(x,y)$ rules out such an edge. It absorbs $(w;y]$ by definition of $\Lambda(x,y)$ and $[x;w]$ by definition of $K(x,w)$, hence it absorbs $[x;y]$. That it contains $w$ and $y$ is clear.

 \Cref{rec4} is a disjunction with respect to the third rightmost vertex in the kernel. In the edge case where this vertex is $x$, the whole kernel is $\{x,m,y\}$. Consider the other case.
 
 For the direct inclusion, let $k\in K_0(x,m,y)$ and let $w$ be the third rightmost vertex of $k$. Then $w\notin \neighbour{\{x, m\}}$ since $k$ is independent. By \cref{secondrightmost}, $\{w, m, y\}$ absorbs $(w, y]$. Hence $w\in \Lambda'(x,m, y)$. 
 
 It remains to show that $k\setminus\{m, y\} \in K(x,w)$. Its independence is clear. It does absorb $[x;w)$  by \cref{secondrightmost}. The remaining case is that of $z\in[w]\setminus\{w\}$. This $z$ must have an out-neighbour in $k\cap [x;m]$ by \cref{localkernel}. If this out-neighbour were $m$, there would be some $I\in\intervals$ containing $f(w)$ and $f(y)$. Then the independence of $k$ would force $I=[f(w);f(y)]$, in turn leading to the contradiction $w\in \delta(m)= \emptyset$.
 
 For the reverse inclusion, let $w\in\Lambda'(x,m,y)$ and $k\in K(x,w)$. From the definition of $\Lambda'$ it is easy to see that the set $k\disj\{m,y\}$ is independent and absorbs $[x;y]$.
 
 Finally, the case when $x\prec y$ and $\{x,y\}$ does absorb $[x;y]$ is clear.
\end{proof}

\paragraph{The recursive description as a labelled acyclic digraph.}
Together, \cref{rec1,rec2,rec4} let us express $K(x,y)$ as a function of at most $\card{\Lambda(x,y)} + \sum_{m\in\delta(y)} \card{\Lambda'(x,m,y)}$ sets of the form $K(x,y_i)$. We view these results as defining a \emph{subproblem digraph} $\mathcal G(D,\prec)$ whose nodes are pairs $(x,y)\in {\vertices D}^2$ with $x\prec y$, plus a terminal node $F_x$ for each $x\in \vertices D$. If $x$ and $y$ are non-adjacent and $\{x;y\}$ does not absorb $[x;y]$, the node $(x,y)$ of $\mathcal G(D,\prec)$ has the following outgoing edges:
\begin{itemize}
    \item for each $w\in\Lambda(x,y)$, an edge labelled with the symbol $w$ to the node $(x,m)$,
    \item for each $m\in\delta(y)\setminus \neighbour {\{x, y\}}$ and  $w\in \Lambda'(x,m,y)$, an edge labelled with the concatenation $mw$ to the node $(x,w)$, unless $\{x,m,y\}$ absorbs $[x;y]$, in which case the sole outgoing edge is labelled $mx$ and goes to $F_x$.
\end{itemize}
Whereas if $\{x,y\}$ does absorb $[x;y]$ the node $(x,y)$ has an edge labelled $x$ to $F_x$.

Our description of the sets $K(x,y)$ is summarised by the following lemma.

\begin{lemma}
Given an FLIG $D$ and the weak order $\prec$ on its vertex set $\vertices D$ coming from a nice FLIG model, we build a digraph $\mathcal G(D,\prec)$ on the vertex set
\[ \{(x,y)\in (\vertices D)^2 \colon x\prec y\} \disj \{F_x \colon x\in \vertices D\}\]
in time polynomial in $\card{D}$. This digraph is acyclic; it has $\bigO(\card { D}^2)$ nodes and maximum out-degree $\bigO(\card{D})$. Each edge is labelled with one or two vertices of $\vertices D$.

For every pair of vertices $(x,y)$ with $x\prec y$, the set $K(x,y)$ is in bijection with the set of directed paths from the node $(x,y)$ to the node $F_x$. Given one such path, the vertices of the corresponding kernel are read off in decreasing $\prec$-order from the edge labels, omitting $y$.
\end{lemma}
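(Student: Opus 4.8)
The plan is to dispatch the structural assertions about $\mathcal G(D,\prec)$ by inspection and then prove the bijection by induction on $\card{[x;y]}$. The structural part is routine. Every edge out of a node $(x,y)$ leads either to the sink $F_x$ or, since $\Lambda(x,y)\subseteq(x;y)$ and $\Lambda'(x,m,y)\subseteq(x;m)\subseteq(x;y)$, to a node $(x,w)$ with $w\prec y$; hence along any directed path the second coordinate strictly $\prec$-decreases until $F_x$ is reached, so $\mathcal G(D,\prec)$ is acyclic. The node set has $\binom{\card D}{2}+\card D=\bigO(\card{D}^2)$ elements, each label is one vertex ($w$ or $x$) or a concatenation of two ($mw$ or $mx$), and the out-edges of $(x,y)$ are indexed by $\Lambda(x,y)$ together with the pairs in $\Disj_m \{m\}\times\Lambda'(x,m,y)$ (an index $m$ with $\{x,m,y\}$ absorbing $[x;y]$ contributing instead the single edge $mx$ to $F_x$); combined with the preceding remark on $\Lambda$ and $\Lambda'$ and the fact that adjacency, the sets $\delta(\cdot)$, and the predicate ``$A$ absorbs $[x;y]$'' are computable in time polynomial in $\size D$, this gives both the out-degree bound and the polynomial construction time.

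For the bijection I would induct on $\card{[x;y]}$. The base cases are: $x$ adjacent to $y$, where $K(x,y)=\emptyset$, the node $(x,y)$ has no out-edges, and both sides are empty; and $x\prec y$ non-adjacent with $\{x,y\}$ absorbing $[x;y]$ (which includes every case with $\card{[x;y]}=2$), where the recursive description gives $K(x,y)=\{\{x,y\}\}$ while $(x,y)$ has the unique out-edge $x$ to $F_x$, so the one path is matched with $\{x,y\}$, its label $x$ listing the kernel's vertices other than $y$. For the inductive step, assume $x\prec y$ non-adjacent and $\{x,y\}$ not absorbing $[x;y]$. A directed path from $(x,y)$ to $F_x$ is its first edge followed by a path from that edge's endpoint to $F_x$, and that endpoint is $F_x$ or a node $(x,w)$ with $\card{[x;w]}<\card{[x;y]}$, so the induction hypothesis applies to the tail. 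I would split the paths by the three kinds of first edge and match them term by term against the recursion: a first edge $w$ ($w\in\Lambda(x,y)$) with tail corresponding to $k'\in K(x,w)$, sent to $k'\disj\{y\}$, against \cref{rec2}; a first edge $mw$ ($m\in\delta(y)\setminus\neighbour{\{x,y\}}$, $w\in\Lambda'(x,m,y)$, $\{x,m,y\}$ not absorbing) with tail $k'\in K(x,w)$, sent to $k'\disj\{m,y\}$, against the second line of \cref{rec4}; and a first edge $mx$ (for $m\in\delta(y)\setminus\neighbour{\{x,y\}}$ with $\{x,m,y\}$ absorbing $[x;y]$), which is the entire path, sent to $\{x,m,y\}$, against the first line of \cref{rec4}. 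Each of these is a bijection onto its piece by the induction hypothesis; the three families of first edges are distinct and their target pieces pairwise disjoint, so summing over $w$ and $m$ and invoking \cref{rec1} yields a bijection between the directed paths $(x,y)\rightsquigarrow F_x$ and $K(x,y)$.

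Finally I would verify the reading-off claim: the first edge supplies the label(s) $w$, or $m$ then $w$, or $m$ then $x$, each of which $\prec$-exceeds every vertex of the kernel $k'$ returned by the tail (indeed $w=\max_\prec k'$, and $m\succ w$ because $w\in(x;m)$); since by induction the tail's labels list $k'\setminus\{w\}$ in decreasing order, prepending the first edge's label(s) lists the associated kernel ($k'\disj\{y\}$, $k'\disj\{m,y\}$, or $\{x,m,y\}$) with its $\prec$-maximum $y$ removed, in decreasing order, as desired.

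The main obstacle is precisely this bookkeeping in the inductive step: matching the three first-edge types to the three terms of the recursion, checking via the model axioms that each index $m$ of a $\Lambda'$- or $mx$-edge actually lies in $(x;y)$ so that no spurious edges or paths arise, confirming that the recursion's disjointness mirrors exactly the partition of paths by first edge, and noting that the first-edge labels always sit at the top of the $\prec$-order so the decreasing-order reading stays consistent. All of this is straightforward once the recursive description of $K(x,y)$ is in hand; the genuinely new work was done in establishing \cref{rec1,rec2,rec4}.
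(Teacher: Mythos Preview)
Your proposal is correct and takes essentially the same approach as the paper. In fact the paper gives no separate proof of this lemma at all: it is presented as a summary of the recursive description in \cref{rec1,rec2,rec4} and of the explicit construction of $\mathcal G(D,\prec)$ just above it, so your induction on $\card{[x;y]}$ and your case split on the type of the first edge are exactly the natural way to make that summary rigorous.
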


This translates properties of the set $K(x,y)$ such as its cardinality or its smallest elements into well-trodden algorithmic problems on acyclic digraphs (the number of directed paths joining two nodes, the shortest directed paths with edge lengths in $\{1,2\}$).

\subsection{Reducing the circular case to the linear case}
\label{fcigtoflig}

Now we show how the set of all kernels in an FCIG $D$ is expressed in term of the sets $K(x,y)$ in some FLIGs that are all induced subgraphs of $D$. Before we consider FCIGs however, let us mention a simple fact about kernels in general digraphs (proof omitted).
\begin{lemma}\label{delete}
If $D$ is a digraph and $a\in \vertices D$ then
\[ \{ k\in \ker D \colon a\in k\} = \{k\in\ker D[\vertices D\setminus \parent a] \colon a\in k\}. \]
\end{lemma}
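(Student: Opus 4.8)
The plan is to prove the two-way set inclusion. The easier direction is $\supseteq$: if $a\in k$ and $k$ is a kernel of $D[\vertices D\setminus\parent a]$, I must check that the same set $k$ is a kernel of all of $D$. Independence is immediate since $D[\vertices D\setminus\parent a]$ is an induced subgraph and no vertex of $\parent a$ can lie in $k$ anyway (it would be non-absorbed-into only if… — actually independence of $k$ in the subgraph plus the fact that adding back the vertices of $\parent a$ cannot create an edge inside $k$). For absorption, take any $z\in\vertices D\setminus k$. If $z\in\parent a$ then $z$ has the out-neighbour $a\in k$, so $z$ is absorbed. If $z\notin\parent a$ then $z$ lies in the vertex set of the subgraph, where it is absorbed by some out-neighbour in $k$; that out-neighbour is still present in $D$. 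Hence $\parent k=\vertices D\setminus k$ and $k\in\ker D$ with $a\in k$.

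For $\subseteq$: let $k\in\ker D$ with $a\in k$. Since $k$ is independent and $a\in k$, no out-neighbour-from-$k$ issue arises, but more to the point no vertex of $\parent a$ can be in $k$, because such a vertex would be adjacent to $a$ (a parent of $a$ is a neighbour of $a$), contradicting independence. Therefore $k\subseteq\vertices D\setminus\parent a$, so it makes sense to view $k$ inside the induced subgraph $D'\coloneqq D[\vertices D\setminus\parent a]$. Independence of $k$ in $D'$ is inherited. For absorption in $D'$: every $z\in\vertices{D'}\setminus k$ satisfies $z\in\vertices D\setminus k$, hence has an out-neighbour $d\in k$ in $D$; since $k\subseteq\vertices{D'}$, we have $d\in\vertices{D'}$, and $z\notin\parent a$ means the edge $z\to d$ survives in $D'$. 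So $z$ is absorbed in $D'$, giving $k\in\ker D'$ with $a\in k$.

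I do not expect any real obstacle here — the statement is a routine sanity check about deleting the in-neighbourhood of a chosen kernel vertex — which is presumably why the authors omit the proof. The only mild subtlety worth a sentence is making explicit that $\parent a\subseteq\neighbour a$ so that no element of $\parent a$ can belong to an independent set containing $a$; everything else is bookkeeping with the identity $\parent k=\vertices D\setminus k$ restricted to the relevant vertex subset.
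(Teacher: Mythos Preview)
Your argument is correct: both inclusions follow from the observation that $\parent a\subseteq\neighbour a$ forces $k\cap\parent a=\emptyset$ whenever $a\in k$ and $k$ is independent, after which independence and absorption transfer between $D$ and the induced subgraph exactly as you describe. The paper itself omits the proof, calling the lemma ``a simple fact about kernels in general digraphs'', so there is nothing to compare against beyond confirming that your routine verification is sound.
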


Now let $D$ be an FCIG and let $\alpha$ be an arbitrary vertex of $D$. In polynomial time, we compute and fix a nice FCIG model of $D$ \cite{OrioloPS}. We may use the same interval notation as for FLIGs, with the caveat that for every two vertices $a$ and $b$ the vertex intervals $[a;b]$ and $[b; a]$ are generally distinct and intersect in $[a] \cup [b]$.

For each kernel $k\in \ker D$, either $\card k=1$, or there exists a unique pair $a,b\in k$ such that $\alpha \in [a;b)$ and $[a;b]\cap k=\{a,b\}$. That is, setting $K_{a,b}\coloneqq \{k\in \ker D \colon [a;b]\cap k=\{a,b\} \}$,
\[ \ker D = \{ k \in \ker D \colon \card k = 1\} \Disj \left(\Disj_{\alpha \in [a;b)} K_{a,b}\right).\]

Single-vertex kernels can be enumerated by brute force in time $\bigO(\size D)$. Hence we are left to deal with the set $K_{a,b}$ for every two distinct vertices $a$ and $b$. It is not hard to see that:

\begin{lemma} If $a$ and $b$ are not adjacent and $\{a,b\}$ absorbs $[a;b]$, then $K_{a,b}$ is the set of kernels of $D[[b;a]]$ containing $a$ and $b$. If either condition fails, $K_{a,b}=\emptyset$.
\end{lemma}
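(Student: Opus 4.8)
The plan is to split the displayed equality into its two inclusions, notice that one of them needs no hypotheses, and then read off the ``if either condition fails'' clause almost for free; the circular geometry is handled by an analogue of \cref{fencepost}.

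\textbf{Forward inclusion (unconditional).} Let $k\in K_{a,b}$. Since $[a;b]\cap k=\{a,b\}$ and the fibres $[a]$ and $[b]$ are cliques in a nice model, $k$ avoids the arc strictly between $a$ and $b$ and meets each of $[a],[b]$ in a single vertex, so $k\subseteq[b;a]$. I then invoke the elementary fact that a kernel $k$ of $D$ with $k\subseteq S$ is again a kernel of $D[S]$ --- independence is inherited, and every $v\in S\setminus k$, being a non-kernel vertex of $D$, already has an out-neighbour in $k\subseteq S$ --- applied with $S=[b;a]$. This gives $K_{a,b}\subseteq\{k\in\ker D[[b;a]]\colon a,b\in k\}$ in all cases. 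In particular, if $a\sim b$ the right-hand side is empty (no independent set contains two adjacent vertices), so $K_{a,b}=\emptyset$: the first failure case is done.

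\textbf{Reverse inclusion, assuming $a\not\sim b$ and $\{a,b\}$ absorbs $[a;b]$.} Let $k\in\ker D[[b;a]]$ with $a,b\in k$. First $k\in\ker D$: independence is inherited from the induced subgraph, and a vertex $v\notin k$ lies either in $[b;a]$, where $k$ being a kernel of $D[[b;a]]$ supplies an out-neighbour in $k$, or in $[a;b]\setminus\{a,b\}$, where the absorption hypothesis supplies an out-edge into $\{a,b\}\subseteq k$. Second $[a;b]\cap k=\{a,b\}$, because $k\subseteq[b;a]$ and $[a;b]\cap[b;a]=[a]\cup[b]$, so the intersection lies inside the two cliques $[a],[b]$ and, $k$ being independent with $a,b\in k$, equals $\{a,b\}$. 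Hence $k\in K_{a,b}$, and with the forward inclusion this proves the equality asserted in the lemma.

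\textbf{Second failure case and main obstacle.} It remains to show that if $a\not\sim b$ but $\{a,b\}$ does not absorb $[a;b]$ then $K_{a,b}=\emptyset$. Fix a witness $z\in[a;b]\setminus(\{a,b\}\cup\parent{\{a,b\}})$ and suppose some $k\in K_{a,b}$ exists. As above $k\subseteq[b;a]$ and $z\notin k$, so $z$ has an out-neighbour $d\in k\subseteq[b;a]$, and $d\notin\{a,b\}$ since $z$ has no out-edge to $a$ or $b$. Adjacency of $z$ and $d$ puts $f(z),f(d)$ in a common interval $I\in\intervals$, which, since they sit on opposite arcs of the cut $\{f(a),f(b)\}$, must contain one of $f(a),f(b)$ --- say $f(a)$, the other case being symmetric, while the subcase $f(a),f(b)\in I$ is ruled out by the fourth defining condition of the model together with $a\not\sim b$. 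Then $a$ and $d$ both have images in $I$, so that condition forces either $a\sim d$ --- impossible, as $a,d\in k$ --- or that the endpoints of $I$ are $f(a),f(d)$, which traced around the circle puts $f(d)$ at $f(a)$ or $f(b)$ and hence $d\in\{a,b\}$ by the fibre-clique property, again a contradiction. The delicate point here --- and where I expect the real work to be --- is this last step, a circular counterpart of \cref{fencepost}: when $z$ or $d$ falls in the fibre $[a]$ or $[b]$ the ``$f(a)$ interior to $I$'' shortcut is unavailable, and one must combine the nice-model clique structure, the defining property $[a;b]\cap k=\{a,b\}$, and the freedom in choosing $z$. Everything else is routine.
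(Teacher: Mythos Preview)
Your overall approach matches the paper's: the terse proof there says the second failure case ``involves an FCIG analogue of \cref{localkernel}, proven similarly'', which is exactly the circular fencepost argument you carry out.  The forward inclusion, the reverse inclusion under the hypotheses, and the $a\sim b$ case are all correct as you wrote them.

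There is, however, a genuine gap in your non-absorption argument.  The step ``traced around the circle puts $f(d)$ at $f(a)$ or $f(b)$'' does not follow from what precedes it.  Having established that $I$ has endpoints $\{f(a),f(d)\}$ (via $a\not\sim d$ and condition~4) and that $f(b)\notin I$, you must split on the two orientations of $I$.  If $I=[f(a);f(d)]$ then, since $f(d)\in(f(b);f(a))$, the arc $I$ sweeps through $f(b)$, contradicting $f(b)\notin I$.  If $I=[f(d);f(a)]$ then $I\subseteq[f(b);f(a)]$ avoids $(f(a);f(b))$ entirely, so $f(z)\in I$ forces $f(z)=f(a)$, i.e.\ $z\in[a]$.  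Thus for a witness $z\in(a;b)$ both orientations yield contradictions and you are done cleanly --- but when the only non-absorbed witness lies in $[a]\cup[b]$, the argument genuinely does not close, and your own caveat is warranted.

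This is not a defect peculiar to your write-up: the paper's invocation of the \cref{localkernel} analogue gives absorption of the \emph{open} interval $(a;b)$ only, and that is in fact all that is needed.  Vertices of $([a]\cup[b])\setminus\{a,b\}$ already belong to $[b;a]$, so they are handled by the kernel property in $D[[b;a]]$ (you used exactly this in your reverse inclusion).  You can resolve the issue either by noting that absorption of $(a;b)$ suffices throughout \cref{fcigtoflig}, or by observing that the distinction between $(a;b)$ and $[a;b]$ is immaterial here.
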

\begin{proof}
The first part is easy: there can be no other kernel vertex in $(a;b)$ as $[a;b]\subset \neighbour{\{a,b\}}$. The second part involves an FCIG analogue of \cref{localkernel}, proven similarly.
\end{proof}

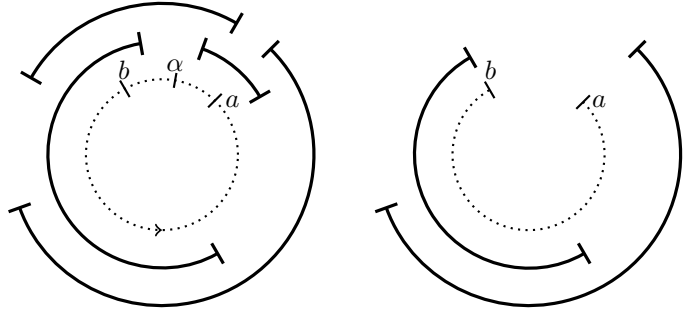
\begin{figure}[ht]
    \centering
    \tikzexternaldisable
    \begin{tikzpicture}[radius=1]
        \draw[dotted] (0,0) circle;
        \path[tips,draw=none, ->] (0:1) arc [start angle=0, end angle=270]; 
        \coordinate  [label=right:$a$] (a) at (45:1);
        \coordinate  [label=above:$b$] (b) at (120:1); 
        \coordinate  [label=above:$\alpha$] (alpha) at (80:1); 
        \draw (80:0.9) -- (80:1.1);
        \path[tips,draw=none, |-|] (a) arc [start angle=45, end angle=120];
        \foreach \source/\target/\height in {30/70/1,60/150/2,100/300/1,-160/45/2}
            {
            \draw[interval]
            let \n1 = {1 +0.5*\height} in
            (\source:\n1) arc [start angle=\source, end angle=\target, radius=\n1];};
    \end{tikzpicture}
    \qquad
    \begin{tikzpicture}[radius=1]
        \coordinate  [label=right:$a$] (a) at (45:1);
        \coordinate  [label=above:$b$] (b) at (120:1); 
        \draw[dotted, |-|] (b) arc [start angle=120, end angle=405];
        \foreach \source/\target/\height in {120/300/1,-160/45/2}
            {
            \draw[interval]
            let \n1 = {1 +0.5*\height} in
            (\source:\n1) arc [start angle=\source, end angle=\target, radius=\n1];};
    \end{tikzpicture}
    \caption{Left: the FCIG $D$ with a nice model. The vertices $a$ and $b$ are not adjacent and absorb $[a;b]$. Right: the subgraph $D'$ obtained by deleting $\parent {\{a,b\}}$ has a nice FLIG model in which $a$ and $b$ are extremal, deduced from the original model by shrinking or deleting intervals without perturbing adjacencies.} 
    \label{fig:FCIGtoFLIG}
\end{figure}

Given $a$ and $b$ non-adjacent and such that $\{a,b\}$ absorbs $[a;b]$, consider the subgraph $D[[b;a]]$ as well as the FCIG interval model $(f,\intervals)$ it inherits from $D$.

By \cref{delete}, we can delete vertices in $\parent {\{a,b\}}$ without modifying the set of kernels containing $a$ and $b$, so we delete $[a;b]\setminus \{a;b\}$. We now have $[a;b]=\{a,b\}$. If the interval $[f(a);f(b)]$ is in $\intervals$, we delete it and retain a valid model, since there is no adjacency it could contribute. Now there are no intervals in $\intervals$ containing both $a$ and $b$. Without removing any edge, we may shrink the intervals in $\intervals$ until some sub-interval of $(f(a);f(b))$ is not covered by $\cup_{I\in\intervals} I$. 

This shows that the resulting induced subgraph $D'$ is an FLIG with the same set of kernels containing both $a$ and $b$ as $D[[b;a]]$, and $D'$ has an FLIG model where $a$ and $b$ are the leftmost and rightmost vertices. This model might not be nice (if we deleted more vertices than intervals), but can be made nice again by deleting unnecessary intervals \cite{OrioloPS}. See \Cref{fig:FCIGtoFLIG} for an illustration.

Thus, computing the cardinality (or an element, or the number of elements of a certain size) of $K_{a,b}$ amounts to computing $K(a,b)$ in the FLIG $D'$ as in \cref{flig}.

Combining the present results with those in \cref{flig}, we obtain \cref{thm:fcig}.

\paragraph{Inapplicability of our approach to general interval graphs.} The proof of \cref{thm:fcig} relies heavily on the structure of the graphs at hand. Namely, the weak vertex ordering arising from a LIG or FLIG model allows for a recursive approach. For example, LIGs can be understood as the intersection graphs of unit intervals--the whole graph structure thus determined by the position of left endpoints. By contrast, there is no such convenient vertex ordering in general interval graphs.

\subsection{Extensions of the algorithm}\label{sub:extensions}
\cref{thm:fcig} can be extended to \cref{thm:main} below.
\begin{theorem}\label{thm:main}
Given a fuzzy circular interval graph $D$ with arbitrary edge orientations, two subsets $S, T \subset \vertices D$, integral vertex weights $w\colon \vertices D \to \integers$, and $t\in \integers$, consider the set
\[ \left\{ k \in \ker D \colon S \subset k \subset T \,\land\, \sum_{x\in k} w(x) = t\right\}.\]
In time polynomial in $\card{D}$, we can compute its number of elements and, if nonempty, return one of them. 
\end{theorem}
In order to do so we modify the graph $\mathcal{G(D,\prec)}$ from \cref{flig} by refining the definitions of $\Lambda$ and $\Lambda'$ as needed, to ensure that kernel vertices respect the newly introduced constraints.

If there are two subsets $S,T\subset \vertices D$ and we want $\mathcal{G(D,\prec)}$ to encode only kernels that include $S$ and are included in $T$, then we may replace \cref{lambda} with
\[
    \Lambda(x,y) \coloneqq \left\{m\in (T \cap (x;y))\setminus (\neighbour {\{x, y\}}\cup \delta(y)) \colon \begin{aligned}&\text{$\{m,y\}$ absorbs $(m;y]$}\\ &[m;y]\cap S \subset \{m; y\}\end{aligned}\right\},\]
and \cref{lambdaprime} similarly. This amounts to deleting some of the edges in $\mathcal{G(D,\prec)}$: the edges with a label not in $T$ and the out-edges from a node $(x,y)$ with at least one label $\ell$ such that $S\cap[\ell;y] \neq \{\ell, y\}$.

If $w\colon \vertices D \to \integers$ is a weight function, then we define the weight of each edge in $\mathcal{G(D,\prec)}$ as the sum of the weights of its labels. For each node $(x,y)$  and terminal $F_x$ of $\mathcal{G(D,\prec)}$, we can list the weights of all directed paths from $(x,y)$ to $F_x$ with multiplicity in a bottom-up fashion, since $\mathcal{G(D,\prec)}$ is acyclic. This still takes time polynomial in $\card D$, independent of $\max_{\vertices D} |w|$.

\subsection{Running times} \label{runtimes}

Let us argue that the running times of our algorithms for counting kernels are
\begin{itemize}
    \item $\bigO(\card{ D}^3\cdot \size D)$ on FCIGs,
    \item $\bigO(\card{ D}^2\cdot \size D)$ on FLIGs and CIGs,
    \item $\bigO(\card{ D} \cdot \size D)$ on LIGs (proper interval graphs).
\end{itemize}

\paragraph{General setting.}
The subproblem digraph $\mathcal G(D,\prec)$ from \cref{flig} can be partitioned into $\card{ D}$ parts, each induced by the vertex set $\{(x,y)\colon y\in\vertices D \land x\prec y\}\disj \{ F_x\}$ for some $x\in \vertices D$. Equivalently, each part correspond to kernels with a same leftmost vertex. There are no edges between parts (and parts need not be connected).  Each part has $\bigO(\card{ D})$ vertices. Hence, for every fixed vertex $x_0$, computing all values of $y\mapsto \card{K(x_0,y)}$ takes time $\bigO(\card{ D}^2)$ once the part containing $F_{x_0}$ has been constructed. Said construction is usually the bottleneck.

For a fixed $y_0 \in \vertices D$, let \begin{align*}
    I &\coloneqq \{ m\prec y_0 \colon m\notin (N(y_0) \cup \delta(y_0)) \land \{m,y_0\} \text{ absorbs } (m;y_0] \}.
    \intertext{For each $x \prec y_0$ the set $\Lambda(x,y_0)$ defined by \cref{lambda} is obtained from $I$ by}
    \Lambda(x,y_0) &= (I \cap (x;y_0))\setminus \neighbour x .
\end{align*}  
Based on this observation, here is a possible algorithm that computes $\Lambda(x,y_0)$ for a fixed $y_0$ and all values $x\in X$, with the vertex $y_0$ and the subset $X \subset \vertices D$ as inputs. Note that, since $\prec$ is not quite a total order but a weak order, it partitions $\vertices D$ into classes ($[x]= \{y\colon y\preceq x \land x \preceq y\}$) and (through quotienting) induces a total order on the set of these classes.

\begin{algorithm}
\caption{An algorithm that computes $\Lambda(x,y_0)$ for a fixed $y_0$ and all values $x\in X$.}
\begin{algorithmic}[1]
\State candidates $\gets \{ m \in \vertices D \setminus (\neighbour y_0 \cup \delta(y_0)) \colon  m \prec y_0\}$ \Comment{as an array of $\card{D}$ bits}
\State $I \gets \emptyset$
\For{each class $C$ in decreasing $\prec$-order starting at $[y_0]$}
\For{$x \in C \cap X$}
\State $\Lambda[x,y_0] \gets I \setminus \neighbour x $ \Comment{$\bigO(\card{ D})$}
\EndFor
    \State{$I \gets I \cup (C\cap \text{candidates})$} \Comment{$\bigO(\card{C})$}
    \For{$m \in C \setminus (\{y_0\} \cup \parent y_0)$}
    \State candidates $\gets \text{candidates} \cap \child m$
    \Comment{$\bigO(\deg m)$}
    \EndFor
\EndFor
\end{algorithmic}
\end{algorithm}

It runs in time $\bigO(\card{D}\cdot \card X + \size D)$. Similarly, if for every $y$ and $m \in\delta(y) \setminus \neighbour y$ we set
\begin{align*}
    J(m,y) &\coloneqq \{w \prec m \colon w \notin \neighbour m \land \text{$\{w,m,y\}$ absorbs $(w;y]$}\},
    \intertext{then, for $x\prec m$, $x\notin\neighbour m$,}
    \Lambda'(x,m,y) &= (J(m,y) \cap(x;m)) \setminus \neighbour x,
\end{align*} 
and a similar algorithm computes the values of $\Lambda'(x,m,y_0)$ for every $x\in X$ and every $m\in \delta(y_0)$ in time $\bigO(\card{\delta(y_0)} \cdot (\card{D}\cdot \card X + \size D))$. Hence, for any fixed $x_0$ it takes time $\bigO(\card{ D}^2\cdot \size D)$ to build only the part of $\mathcal G(D,\prec)$ containing $F_{x_0}$, or $\bigO(\card{ D}^3\cdot \size D)$ to build all parts.

As for FCIGs, each set $K_{a,b}$ from \cref{fcigtoflig} reduces to a set $K(a,b)$ \emph{in an FLIG that potentially depends on both $a$ and $b$}, but careful consideration of the deleted vertices lets us argue that said FLIG only depends on $a$. Thus we have to build $\bigO(\card D)$ parts, for a total running time of $\bigO(\card{ D}^3\cdot \size D)$.

\paragraph{In circular interval graphs.} In circular interval graphs, we can ensure the additional model property that $\delta(y)=\emptyset$ for every vertex $y$. This leads to an empty domain for $\Lambda'$, and therefore to a running time of $\bigO(\card{ D}^2\cdot \size D)$.

\paragraph{In fuzzy linear interval graphs.} We can simplify the analysis by tracking only the rightmost kernel vertex. For example, add an isolated dummy vertex $x_0$ at the left end of the $\prec$-ordering. Because $x_0$ is isolated, it belongs to all kernels. Thus we only need to compute the sets $K(x_0, y)$ where $y\in \vertices D$, that is, we only have to build the part of $\mathcal G(D,\prec)$ that contains $F_{x_0}$. This takes time $\bigO(\card{ D}^2\cdot \size D)$, which is essentially optimal in the sense that it is also the running time of the FLIG recognition algorithm.

\paragraph{In linear interval graphs.} In linear interval graphs, we combine the insights for FLIGs and CIGs: we build a single part of the subproblem digraph and do not compute $\Lambda'$. This leads to a running time  $\bigO(\card{ D} \cdot \size D)$. We leave the matter of getting to e.g.\@ $\bigO(\size D)$ open.

\section{Further remarks}
\label{sec:remarks}

\subsection{Counting maximal and maximum independent sets}
\label{sec:mis}

An independent set of a graph is \define{maximal} when it has no independent superset, and \define{maximum} when its cardinality is largest among independent sets. Clearly, the latter entails the former.

The kernels of a digraph with all edges bidirectional are exactly its maximal independent sets. Hence \cref{thm:main} translates into polynomial-time algorithms for counting maximal independent sets, maximal independent sets of a certain size, and thus also maximum independent sets in FCIGs; see \cref{main:mis,main:maxis}. Compare with chordal graphs and other quasi-line graphs:

\begin{lemma}[\cite{okamotoCountingNumberIndependent2008}]
Counting maximal independent sets is $\sharpP$-complete on chordal graphs.
\end{lemma}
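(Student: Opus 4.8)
The statement is quoted from \cite{okamotoCountingNumberIndependent2008}; the plan is to establish it in two stages. First, membership in \sharpP{} is immediate: a nondeterministic polynomial-time machine guesses a vertex subset and accepts exactly when it is independent and every outside vertex has a neighbour inside, both linear-time checks. For \sharpP-hardness I would reduce, ultimately, from a classical \sharpP-complete problem such as counting the independent sets of a bipartite graph.

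The second stage is a clean parsimonious reduction from counting independent sets of a chordal graph to counting maximal independent sets of a chordal graph. Given a chordal graph $G$, attach to every vertex $v$ a fresh pendant vertex $v'$ joined to $v$ alone, and call the result $G'$. Each $v'$ is simplicial, so the pendants can be placed first in a perfect elimination ordering, leaving $G$; hence $G'$ is again chordal. Map each independent set $I$ of $G$ to $\phi(I)\coloneqq I \disj \{v'\colon v\in\vertices G\setminus I\}$. Then $\phi(I)$ is always a maximal independent set of $G'$: it is independent because the pendants are pairwise non-adjacent while each $v'\notin\phi(I)$ has its only neighbour $v\in\phi(I)$ outside $\phi(I)$; it is maximal because adding a vertex $v\notin I$ is blocked by $v'\in\phi(I)$, and adding a vertex $v'$ with $v\in I$ is blocked by $v$. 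The map is injective since $\phi(I)\cap\vertices G=I$, and onto: for a maximal independent set $M$ of $G'$, taking $I\coloneqq M\cap\vertices G$ forces $v'\in M$ exactly when $v\notin M$, so $M=\phi(I)$. Thus the maximal independent sets of $G'$ are in bijection with the independent sets of $G$, and the pendant construction preserves chordality.

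It remains to invoke the first, harder stage: counting independent sets is already \sharpP-complete on chordal graphs, which is also established in \cite{okamotoCountingNumberIndependent2008}. The way I would prove that is to reduce from a covering problem -- counting set covers, equivalently counting satisfying assignments of a monotone CNF -- by creating one small gadget per set and per element and gluing the gadgets along shared cliques into a clique tree, so that the number of independent sets of the assembled graph determines the number of covers after an easy, efficiently invertible renormalisation (dividing out fixed powers of two contributed by dummy vertices). The delicate point here, and the main obstacle of the whole argument, is keeping the graph chordal: the natural encodings of a constraint -- two cliques joined by a bipartite edge set, or a rook's-graph cross of a ``row'' clique and a ``column'' clique -- introduce induced four-cycles, so the gadgets must instead be assembled as a genuine tree of overlapping cliques containing no induced cycle of length four or more. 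Composing the two stages yields the claimed \sharpP-completeness.
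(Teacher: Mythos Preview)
Your pendant-vertex reduction is correct as written: attaching a leaf to every vertex of a chordal graph $G$ produces a chordal graph $G'$ whose maximal independent sets are in bijection with the independent sets of $G$. The fatal gap is the intermediate result you rely on. Counting \emph{all} independent sets of a chordal graph is \emph{not} \sharpP-complete; the very paper you cite, \cite{okamotoCountingNumberIndependent2008}, proves the opposite by giving a polynomial-time algorithm for it via dynamic programming on a clique tree. Your two-stage plan therefore reduces from a problem in \class{P} and establishes nothing about hardness. The sketch you offer for the ``first stage'' cannot succeed either: any chordal gadget you build will have its independent sets counted in polynomial time by that same clique-tree DP, so no such construction can make the count \sharpP-hard.

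The actual argument in \cite{okamotoCountingNumberIndependent2008} avoids this trap by reducing a \sharpP-hard counting problem (counting set covers) \emph{directly} to counting maximal independent sets; the maximality constraint is what encodes the covering requirement, and cannot be stripped off into a separate stage. Your intuition about gluing gadgets into a clique tree is in the right spirit, but the target must be maximal independent sets from the outset. (For the record, the present paper does not reprove this lemma at all; it merely quotes it from \cite{okamotoCountingNumberIndependent2008}.)
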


\begin{lemma}
Counting maximum independent sets is $\sharpP$-complete on line graphs of bipartite graphs. 
\end{lemma}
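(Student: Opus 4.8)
The plan is to reduce from a known $\sharpP$-complete counting problem on bipartite graphs. The natural candidate is counting perfect matchings (equivalently, the permanent of a 0/1 matrix), which is $\sharpP$-complete by Valiant, but since we want \emph{maximum} independent sets of the line graph $L(H)$ for bipartite $H$, and independent sets of $L(H)$ are exactly matchings of $H$, a maximum independent set of $L(H)$ corresponds to a maximum matching of $H$. So the cleaner route is: counting maximum matchings of a bipartite graph is $\sharpP$-complete, and this number equals the number of maximum independent sets of $L(H)$. Hence I would first cite or establish that counting maximum-size matchings in bipartite graphs is $\sharpP$-hard, then invoke the correspondence.

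First I would recall the standard bijection: for a graph $H$, the edges incident to a common vertex form a clique in $L(H)$, a set of edges is independent in $L(H)$ iff it is a matching in $H$, and this bijection preserves cardinality, so maximum independent sets of $L(H)$ correspond exactly to maximum matchings of $H$. Therefore $\card{\{\text{maximum independent sets of } L(H)\}} = \card{\{\text{maximum matchings of } H\}}$, and $L(H)$ is computable from $H$ in polynomial time. Membership in $\sharpP$ is routine (a maximum independent set has polynomially-bounded certificate, and its size can be computed in polynomial time via bipartite matching, so one can verify "is a maximum independent set"). Then I would establish $\sharpP$-hardness of counting maximum matchings in bipartite graphs. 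One clean way: reduce from counting \emph{perfect} matchings. Given a bipartite graph $H$ on parts of equal size $n$ (perfect-matching counting on balanced bipartite graphs is $\sharpP$-complete), a maximum matching of $H$ has size $\nu(H) \le n$, and if $H$ has a perfect matching then $\nu(H)=n$ and maximum matchings are exactly perfect matchings; the issue is detecting/forcing this. Since $\nu(H)$ is computable in polynomial time, if $\nu(H) < n$ we could still be stuck, so instead I would pad: the König--Egerváry / Dulmage--Mendelsohn decomposition lets one reduce to the case where every edge lies in some maximum matching, or one can add a gadget forcing a known maximum-matching structure. Alternatively, and more robustly, reduce from the permanent directly by the classical trick of counting matchings of each size via interpolation—but for a plain $\sharpP$-hardness statement it suffices to exhibit one polynomial-time reduction, so I would take a bipartite graph $H'$ constructed so that its maximum matchings are in polynomial-time-computable bijection with the perfect matchings of a given balanced bipartite $H$ (e.g. attach a pendant edge to each vertex of the smaller side if unbalanced, or use the Dulmage--Mendelsohn core so that $\nu = $ the size of one side).

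The main obstacle I expect is the hardness step rather than the line-graph translation: one must pass from counting \emph{perfect} matchings (the textbook $\sharpP$-complete problem) to counting \emph{maximum} matchings without a priori knowing that the maximum matching number equals what one wants. The cleanest fix is to restrict perfect-matching counting to balanced bipartite graphs admitting at least one perfect matching—this restricted problem is still $\sharpP$-complete (one can decide existence in polynomial time and the hardness construction, e.g. from 3-dimensional matching or from the permanent, already produces such instances)—and then maximum matchings coincide with perfect matchings, so the counts are literally equal. With that in hand, the chain $\text{\#perfect matchings of balanced } H = \text{\#maximum matchings of } H = \text{\#maximum independent sets of } L(H)$ completes the reduction, and $L(H)$ of a bipartite graph is exactly the target class, so we are done.
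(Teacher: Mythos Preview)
Your proposal is correct and, after some exploratory detours, lands on exactly the argument the paper gives: use that independent sets of $L(H)$ are matchings of $H$, decide in polynomial time whether the bipartite $H$ has a perfect matching, and when it does the maximum matchings coincide with the perfect matchings, so counting maximum independent sets of $L(H)$ solves Valiant's $\sharpP$-complete permanent problem. The paper's proof is just the last paragraph of your proposal, without the intermediate considerations of padding, Dulmage--Mendelsohn, or interpolation.
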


\begin{proof}
Let $G$ be a bipartite graph. In polynomial time, we decide whether $G$ has perfect matchings by building a largest matching (e.g.\@ by the Ford--Fulkerson algorithm) and checking whether it is perfect. In the affirmative, the perfect matchings of $G$ are the maximum independent sets of its line graph. Now an algorithm that counts the latter also counts the former, but counting the perfect matchings of bipartite graphs is $\sharpP$-complete \cite{valiantComplexityComputingPermanent1979}.
\end{proof}

\subsection{Tree-width, clique-width}
\label{cliquewidth}

The tree-width of a digraph $D$ is an integer $\treewidth(D)\in\mathbf N$ measuring its complexity on a scale on which trees are simplest \cite{courcelleGraphStructureMonadic2012}. Note that $\treewidth(D)$ does not depend on edge orientations. 

A set of vertices $k$ being a kernel is expressible in the \emph{monadic second-order logic of directed graphs} (see e.g.\@ \cite{ArnborgLS}), for example by the formula:
\[ \forall v \quad  (v \notin k  \iff \exists u\quad  u\in k \land Evu ), \]
where the relation symbol $E$ encodes directed edges, so the counting version of Courcelle's theorem \cite[Th.\@ 6.56]{courcelleGraphStructureMonadic2012} gives

\begin{theorem}\label{courcelle}
For some computable function $f$ there exist algorithms that, given a digraph $D$, compute $\card{\ker D}$, and, if $\ker D\neq \emptyset$, build a largest and a smallest kernel, all of it in time $\bigO(f(\treewidth(D)) \cdot \size D)$.
\end{theorem}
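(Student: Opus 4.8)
The plan is to obtain all three algorithms as a single instantiation of the counting-and-optimisation version of Courcelle's theorem, once three routine observations are in place. First, the displayed formula has one free set variable $k$ and otherwise only first-order quantifiers, so it is an MSO formula over the vocabulary of directed graphs, in which $E$ is the (not necessarily symmetric) binary edge relation. Second, the Gaifman graph of such a structure is exactly the underlying undirected graph of $D$, whose treewidth is $\treewidth(D)$ by our convention that orientation is irrelevant to treewidth; hence a digraph of bounded treewidth is a relational structure of bounded treewidth in Courcelle's sense. Third, a tree decomposition of $D$ of width $\treewidth(D)$ is computable in time $\bigO(g(\treewidth(D))\cdot\card D)$ by Bodlaender's algorithm, and an approximate decomposition of constant-factor larger width is computable even more cheaply; so we may assume a decomposition is given.

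Given the decomposition, I would compile the kernel formula into a finite bottom-up tree automaton $\mathcal A$ over the decomposition tree, whose number of states is bounded in terms of the formula and $\treewidth(D)$ only. Running $\mathcal A$ over the tree performs a dynamic program on the bags in which, at each node and for each state of $\mathcal A$, we keep (i) the number of partial choices of $k$ that drive the automaton into that state, and (ii) the maximum and the minimum size of such a choice. This is exactly \cite[Th.\@ 6.56]{courcelleGraphStructureMonadic2012}: reading the entries at the root yields $\card{\ker D}$ and the cardinalities of a largest and a smallest kernel, the latter two returning $-\infty$ / $+\infty$ precisely when $\ker D=\emptyset$; a top-down traceback through $\mathcal A$ then produces one kernel realising each extremum. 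The whole computation is one pass over a decomposition of size $\bigO(\card D)$ doing $f(\treewidth(D))$ work per bag, plus $\bigO(\size D)$ to read the input, which gives the stated bound. (The same dynamic program, run over a semiring that also carries weights, would handle weighted largest/smallest kernels; see the LinEMSO framework.)

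The one point needing a little attention is that the three quantities are accumulated simultaneously, i.e.\ over the product of the counting semiring $(\nats,+,\times)$ with two copies of the tropical semiring; one should check that every automaton transition acts as a homomorphism for this structure, but this is immediate because each transition either combines two independent subtrees — a product of counts and a sum of sizes — or introduces, forgets, or joins a single vertex, which shifts sizes by a constant and leaves counts unchanged. I do not anticipate any real difficulty: the formula is about as simple as a genuinely non-trivial MSO property can be, and the rest is the textbook machinery behind the cited theorem.
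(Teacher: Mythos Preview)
Your proposal is correct and follows exactly the paper's approach: observe that the kernel property is MSO-definable via the displayed formula and invoke the counting/optimisation version of Courcelle's theorem \cite[Th.~6.56]{courcelleGraphStructureMonadic2012}. The paper itself gives no more than this one-line argument, so your additional detail on Gaifman graphs, Bodlaender's algorithm, the automaton compilation, and the semiring bookkeeping simply unpacks what the cited theorem already packages.
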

 The proofs, constructive but quite involved, yield fast growing functions $f$, making said algorithms impractical. Still, this approach has interesting consequences, such as a polynomial-time algorithm for deciding the existence of kernels of size $\bigO(\log^2 \size D)$ in planar digraphs \cite{gutinKernelsPlanarDigraphs2005}.

As for our \cref{thm:main}, there is little overlap with these results since even LIGs have arbitrarily large cliques, hence unbounded tree-width.

\emph{Clique-width}, the other most common complexity measure, is as coarse as tree-width in our orientation-agnostic setting. Indeed, unlike tree-width, clique-width strongly depends on edge orientations: if a class $\mathcal C$ of undirected graphs has unbounded tree-width (say, if $\mathcal C$ contains all complete graphs) then the digraph class obtained by taking all possible edge orientations on members of $\mathcal C$ has unbounded clique-width \cite[Prop.\@ 2.117]{courcelleGraphStructureMonadic2012}. In fact, we show that the existence of kernels is \class{NP}-complete on cographs (which, when undirected, are the graphs of clique-width at most $2$): see \cref{cograph-hardness} below. This contrasts with our \cref{thm:main}, and with the positive results on adequately oriented cographs \cite{Abbas--Saoula}.

Likewise, \emph{bi-min-width} is another complexity measure with applications to kernel-finding and related problems \cite{Jaffke22} which, being unbounded already on orientations of complete graphs, cannot serve our purposes.

\subsection{Cographs and threshold graphs}

One may define \define{cographs} inductively as being either a single-vertex graph, or the union of $G_1$ and $G_2$, or the join\footnote{One obtains the \define{join} of two disjoint undirected graphs by adding every possible edge between them.} of $G_1$ and $G_2$, where $G_1$ and $G_2$ are smaller disjoint cographs. In particular, cliques, independent sets, and matchings are easily identified as cographs.

\begin{theorem}\label{cograph-hardness}
The existence of kernels is \class{NP}-complete on cographs with bidirectional edges.
\end{theorem}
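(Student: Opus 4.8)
The plan is a polynomial-time many-one reduction from SAT, organised around one reusable ``clause-filter'' construction that is stable under the cograph operations. Membership in \class{NP} is immediate (a kernel is a certificate of size $\bigO(\card D)$, and independence and absorption are checkable in time $\bigO(\size D)$), so I would spend the effort on hardness. Given a CNF formula $\phi$ with variables $x_1,\dots,x_n$ and non-empty clauses $C_1,\dots,C_m$ (an empty clause makes $\phi$ unsatisfiable and can be mapped to a fixed no-instance), the first step is the \emph{choice gadget} $G_0 \coloneqq \Disj_{i=1}^n H_i$, where $H_i$ is the digraph on $\{x_i,\bar x_i\}$ with a single bidirectional edge. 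The $H_i$ are the connected components of $G_0$ and $\ker H_i = \{\{x_i\},\{\bar x_i\}\}$, so $\ker G_0$ is exactly the family of sets $\{\ell_1,\dots,\ell_n\}$ with $\ell_i\in\{x_i,\bar x_i\}$, that is, the truth assignments of $\phi$; and $G_0$ is a cograph (a disjoint union of edges).

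The heart of the argument is a \emph{filter lemma}: for a digraph $H$ on a nonempty vertex set and any nonempty $S\subseteq \vertices H$, let $H^{+}_S$ be obtained by adding one new vertex $z$ with arcs $z\to s$ for $s\in S$ and $v\to z$ for $v\in\vertices H\setminus S$. Then the underlying graph of $H^{+}_S$ is the join of the underlying graph of $H$ with a single vertex --- hence a cograph when $H$'s is --- and $\ker(H^{+}_S) = \{k\in\ker H : k\cap S\neq\emptyset\}$. I would prove this by the standard join case analysis, in which every kernel lies entirely on one side: $\{z\}$ is a kernel iff every vertex of $\vertices H$ has an arc into $z$, which fails because $S\neq\emptyset$; $\emptyset$ is never a kernel; and a (necessarily nonempty) $k\subseteq\vertices H$ is a kernel of $H^{+}_S$ iff it is a kernel of $H$ --- all new arcs touch $z\notin k$, so independence and the absorption of $\vertices H\setminus k$ carry over from $H$ --- together with the single extra condition that $z$ be absorbed, i.e.\ that $z$ have an out-neighbour in $k$, i.e.\ that $k\cap S\neq\emptyset$.

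With the lemma in hand, I would iterate the filter once per clause: writing $S_j\subseteq\vertices{G_0}$ for the set of literal-vertices of $C_j$ (nonempty since $C_j$ is non-empty), set $G_j\coloneqq (G_{j-1})^{+}_{S_j}$ with a fresh vertex $z_j$. Each $S_j$ lies in $\vertices{G_0}$ and so avoids the earlier vertices $z_1,\dots,z_{j-1}$, making every application of the lemma legitimate; by induction $\ker G_m = \{k\in\ker G_0 : k\cap S_j\neq\emptyset\ \text{for all}\ j\}$. For a truth assignment $k$ the condition $k\cap S_j\neq\emptyset$ holds exactly when $k$ satisfies $C_j$, so the cograph $G_m$ --- which has bidirectional edges and is built in polynomial time --- has a kernel iff $\phi$ is satisfiable. (As a by-product $\card{\ker G_m}$ is the number of satisfying assignments, so the reduction is even parsimonious.)

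I expect the only delicate point to be the filter lemma, and within it the need to rule out the spurious singleton kernel $\{z_j\}$: this is exactly what dictates the orientation (not all vertices of $S_j$ may have arcs into $z_j$), so getting the arc directions right is the one place that demands care. The rest --- that a kernel of a join stays within one side, and that $\ker G_0$ is the family of assignments --- is routine.
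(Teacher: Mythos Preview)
Your reduction is correct and, like the paper's, is a parsimonious reduction from SAT to cograph kernels, but the two constructions are organised differently. The paper builds the digraph in one shot as a single join: a clique on the clause set $\clauses$ joined with a perfect matching on $L\disj\{\alpha,\omega\}$, oriented so that each clause points to its literals and to $\alpha$, while non-literals and $\omega$ point into it; the extra $\alpha,\omega$ edge is an ad-hoc gadget whose sole purpose is to force $\omega$ into every kernel and thereby exclude all clause vertices. Your approach instead isolates a reusable \emph{filter lemma} (joining with a single vertex whose out-neighbourhood is $S$ deletes exactly the kernels missing $S$) and applies it once per clause, yielding $m$ nested single-vertex joins rather than one join with a clause clique. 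The trade-off is that the paper's digraph has a flatter cograph decomposition and treats all clauses symmetrically, whereas your version is more modular, avoids the $\alpha,\omega$ trick entirely, and makes the parsimony and the inductive invariant $\ker G_j=\{k\in\ker G_0: k\cap S_i\neq\emptyset\ \forall i\le j\}$ completely transparent.
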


\begin{proof}[Proof of \cref{cograph-hardness}]
The problem is clearly in \class{NP}. 

Fix an instance of the Boolean satisfiability problem (SAT) in conjunctive normal form: a set of variables $\variables$ which defines a set of literals $L = \{v , \bar v\colon v\in \variables\} \simeq \variables \times \{0,1\}$, and a set of disjunctive clauses $\clauses\subset \powerset L$. A solution of the SAT instance is a subset $S\subset L$ such that $v\in S \iff \bar v \notin S$ for each variable $v$ and $S$ intersects every clause in $\clauses$.

Build a digraph on $\clauses\disj L$ by putting a bidirectional edge between every pair of clauses in $\clauses$, and  between every variable $v\in \variables$ and its negation $\bar v$. Also add two special vertices $\alpha$ and $\omega$ with an edge from $\alpha$ to $\omega$. Finally, for each clause $c\in \clauses$ put a directed edge from $c$ to the literals that appear in $c$ and to $\alpha$, as well as directed edges to $c$ from the literals that do not appear in $c$ and from $\omega$. The resulting digraph on $\clauses\disj L \disj \{\alpha,\omega\}$, which we will call $D$, has the following properties:
\begin{itemize}
    \item $D$ has $\card{\variables} + \card{\clauses} + 2$ vertices;
    \item (the undirected graph underlying) $D$ is a cograph, as the join of a complete graph on $\clauses$ and a perfect matching on $L \disj \{\alpha,\omega\}$;  
    \item If $k$ is a kernel of $D$ then $\omega \in k$ and $k\subset L \disj\{\omega\}$. Indeed, since $\child \alpha=\{\omega\}$, either $\alpha\in k$ or $\omega\in k$, but in the first case $\child \omega = \clauses \subset \neighbour \alpha \subset \neighbour k$ contradicts the independence of $k$, since $\omega$ must have an out-neighbour in $k$. Hence $\omega \in k$, which implies $(\clauses\disj\{\alpha\}) \cap k = \neighbour \omega\cap k = \emptyset$;
    \item In view of the previous fact, kernels are subsets of literals (plus $\omega$) containing exactly one of $v$ and $\bar v$ for each variable $v\in \variables$, and such that each $c\in \clauses$ has at least one out-neighbour in the kernel. Thus the kernels of $D$ are in bijection with the solutions of the SAT instance. 
\end{itemize}

Hence this construction is a polynomial-time parsimonious reduction of SAT to the existence of kernels in cographs.
\end{proof}

We conclude by noting that the counting problem is not only polynomial but linear on a particular subclass of cographs, even though said subclass, having unbounded tree-width, does not fall under the purview of \cref{courcelle}. 

A cograph is a \define{threshold graph} if it has a cograph construction sequence as above where $G_1$ is a single vertex at each step.

\begin{theorem}\label{threshold}
Counting the kernels of an arbitrarily oriented threshold graph $T$ and, when applicable, building one, takes linear time $\bigO(\size T)$. If $\omega(T)$ is the clique number, enumerating its kernels takes time $\bigO(\omega(T) \cdot \card T + \size T)$.
\end{theorem}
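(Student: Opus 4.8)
The plan is to exploit the extremely rigid structure of a threshold graph: there is a linear vertex ordering $v_1, v_2, \dots, v_n$ such that each $v_i$ is introduced (in the cograph construction) as a single vertex that is joined either to \emph{none} of the earlier vertices (an "isolated" step) or to \emph{all} of them (a "dominating" step). Equivalently, building the sequence backwards, at each stage there is a vertex that is either universal or isolated in the remaining induced subgraph. Such an ordering can be computed in time $\bigO(\size T)$ by the standard degree-based recognition algorithm. First I would compute this ordering, together with, for each vertex, its in- and out-neighbourhoods among the later vertices (this is where the edge orientations enter).

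Next I would set up a dynamic program over the prefixes (or rather suffixes) of this ordering. The key observation is that an independent set of a threshold graph can contain at most one vertex of each "clique block"; more usefully, if $v_i$ is a dominating vertex at its step then any kernel either contains $v_i$ and nothing earlier that is adjacent to it, or must absorb $v_i$ via an out-edge to some later vertex. Because a dominating vertex is adjacent to \emph{everything} before it, the set of candidate kernels collapses dramatically: the "type" of a partial kernel is determined by a constant amount of information — essentially whether the kernel has already committed to a vertex inside the current independent-set block, plus a flag recording whether a dominating vertex still needs to be absorbed. So the state space of the dynamic program has size $\bigO(\card T)$ rather than exponential, and each transition is $\bigO(1)$ amortised (charging to the edges we inspect), giving the $\bigO(\size T)$ bound for counting and for producing one kernel by the usual "remember a witnessing choice at each state" trick. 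For the enumeration statement, each kernel is output by walking the dynamic-programming trace; a kernel has at most $\omega(T)$ vertices since $T$ is a cograph hence perfect and an independent set meets each maximal clique at most once — wait, more directly, the independent sets of a threshold graph have size at most $\omega(T)$ because the complement of a threshold graph is again a threshold graph, so independent sets correspond to cliques — hence each kernel takes $\bigO(\omega(T))$ time to write down after $\bigO(\size T)$ preprocessing, for a total of $\bigO(\omega(T)\cdot \card T + \size T)$ (the $\card T$ factor bounding the number of kernels times the per-kernel cost, folded into the stated bound; if there are more kernels the output size itself dominates and absorbs the cost).

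The main obstacle I expect is getting the dynamic-programming state \emph{exactly} right in the presence of bidirectional edges and of the interaction between the "isolated" blocks and the "dominating" vertices: a dominating vertex $v_i$ has edges to every earlier vertex, but those edges may be oriented into $v_i$, out of $v_i$, or both, and whether $v_i$ can be absorbed depends on out-edges to \emph{later} vertices which in the suffix-DP have already been decided. I would handle this by processing vertices from last to first and carrying a small flag "the latest dominating vertex seen so far is, or is not, already absorbed/blocked", updating it whenever a new candidate kernel vertex is placed or a new dominating vertex is encountered; the correctness argument then reduces to checking that a kernel is consistent with the suffix-DP run iff it satisfies the local absorbing and independence constraints at each step, which follows from the threshold structure because adjacency to earlier vertices is all-or-nothing. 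Once the state is pinned down, correctness and the running-time analysis are routine book-keeping.
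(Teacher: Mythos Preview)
Your overall approach---peel the threshold construction sequence one vertex at a time and carry a small amount of state---is the same as the paper's. The paper writes it as a recursion: for $T$ the disjoint union or join of a single vertex $v$ with a smaller threshold graph $T'$, it expresses $\ker_{\mathcal F} T \coloneqq \{k \in \ker T : k \cap \mathcal F = \emptyset\}$ in terms of $\ker_{\mathcal F'} T'$, with the forbidden set $\mathcal F$ as the recursion's state.

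Your proposed state, a single flag ``the latest dominating vertex seen so far is or is not already absorbed'', is not sufficient. Several dominating vertices can be simultaneously outside the kernel and not yet absorbed, and since orientations are arbitrary their out-neighbourhoods among the earlier (isolated-step) vertices can be completely different: placing an earlier vertex $w$ in the kernel may absorb one pending dominator but not another. Concretely, take isolated-step vertices $v_1, v_2$ followed by dominating-step vertices $v_3, v_4$, and orient the edges so that the only arc from $\{v_3, v_4\}$ into $\{v_1, v_2\}$ is $v_3 \to v_2$; then $\{v_1, v_2\}$ is not a kernel (it fails to absorb $v_4$), yet a last-to-first scan that overwrites the flag when it reaches $v_3$ will happily accept $\{v_1, v_2\}$ once $v_2$ clears that flag. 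What actually makes the problem linear is the stronger structural observation that a kernel of a threshold graph is entirely determined by which (at most one) dominating vertex it contains, so there are at most $\card{T}$ candidates to check; a set-valued state as in the paper, or a direct scan over those candidates, is what is needed rather than a single bit.

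Your enumeration argument also has the inequality the wrong way round. It is not true that each kernel has at most $\omega(T)$ vertices: an edgeless graph has $\omega = 1$ and a single kernel of size $n$. The fact the paper uses is that a threshold graph has at most $\omega(T)$ \emph{maximal independent sets}, hence at most $\omega(T)$ kernels; each has size at most $\card{T}$, and appending one vertex to every partial kernel at each disjoint-union step costs $\bigO(\omega(T))$ per step, whence the $\bigO(\omega(T)\cdot\card{T} + \size{T})$ bound.
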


These results are already known for the maximal independent sets in undirected threshold graphs \cite{gurskiCountingEnumeratingIndependent2020}.

\begin{proof}[Proof of \cref{threshold}]
For every set $\mathcal F\subset \vertices T$, let $\ker_{\mathcal F} T \coloneqq \{ k \in \ker T \colon k \cap \mathcal F = \emptyset\}$. A threshold graph construction sequence for $T$ is found in linear time \cite[\S 1.4.2]{mahadevThresholdGraphsRelated1995}, so we give a recursive description of $\ker_{\mathcal F} T$. 
\begin{itemize}
    \item If $T$ is a single vertex $v$, then \[ \ker_{\mathcal F} T = \begin{dcases*} \emptyset &if $v \in\mathcal F$,\\ \{\{v\}\} &otherwise.\end{dcases*}\]
    \item If $T$ is the disjoint union of a vertex $v$ and a subgraph $T'$, then 
    \[ \ker_{\mathcal F} T =\begin{dcases*} \emptyset &if $v\in\mathcal F$,\\ \{\{v\}\disj k \colon k \in \ker_{\mathcal F} T'\} &otherwise.\end{dcases*}\]

    \item If $T$ is the join of a vertex $v$ and a subgraph $T'$, then 
    $\{v\}\in \ker_{\mathcal F} T$ iff $v$ absorbs $T'$ and $v \notin \mathcal F$ (and we can check these conditions in time $\bigO(\deg v + n)$) and regardless 
    \[\ker_{\mathcal F} T\setminus\{\{v\}\} =\ker_{\mathcal F \setminus \{v\}} T'
    .\]
\end{itemize}
This lets us compute $\card{\ker_{\mathcal F} T}$ in time $R(\card T, \card{\edges T})$ where (for some absolute constant $C$) the three cases above give
\begin{align*}
    R(n, m) &\leq C + R(n - 1, m) + \max_{v\in \vertices T} (C \cdot (\deg v + n) + R(n -1 , m -\deg v))\\
    &= \bigO(m + n).
\end{align*} 

For enumerating $\ker_\mathcal{F} T$ the same reasoning applies, except that each \enquote{disjoint union} step spends constant time appending a vertex to each set in $\ker_\mathcal{F} T'$. As a threshold graph, $T$ has $\omega(T)$ maximal independent sets \cite{gurskiCountingEnumeratingIndependent2020}, so $\card{\ker_\mathcal{F} T'} \leq \card{\ker_\mathcal{F} T} \leq \omega(T)$ and we obtain  
\begin{align*}
    R'(n, m) &\leq C + (R'(n - 1, m) + C \cdot \omega(T)) \\& \quad + \max_{v\in \vertices T} (C \cdot (\deg v + n) + R'(n -1 , m -\deg v)  )\\
    &= \bigO(\omega(T) \cdot n + m).\qedhere
\end{align*} 
\end{proof}

\begin{remark}
The maximal independent sets of a graph are enumerable with polynomial delay, hence, if there are $\mu$ such sets, enumerable in total time $\mu \cdot n^{\bigO(1)}$. This is unlikely to extend to kernels. Even if only on cographs, a polynomial-delay algorithm for kernels would solve SAT and (because of the parsimonious reduction) an enumeration algorithm polynomial in the total size of its output would solve unambiguous-SAT. This would respectively establish $\class P = \class{NP}$ and, by the Valiant--Vazirani theorem, $\class{RP} = \class{NP}$ \cite{valiantNPEasyDetecting1986}.
\end{remark}
\subsection{Other graph classes}

\paragraph{Concave-round graphs.}
A graph is \define{concave-round} if its vertices have a circular ordering in which every \emph{closed} neighbourhood (that is, $\{x\}\disj \neighbour x$) is an interval \cite{Bang-JensenHY}. Concave-round graphs are quasi-line. They form a subclass of normal circular-arc graphs \cite{safeCharacterizationLineartimeDetection2020} and a proper superclass of CIGs. Unknowingly, we have already solved the problem for this class, because of the following easy but unpublished inclusion.

\begin{remark}\label{concaveround}
Concave-round graphs form a proper subclass of the fuzzy circular interval graphs. More precisely, they are either fuzzy linear interval graphs or circular interval graphs (proper circular arc graphs).
\end{remark}
\begin{proof}
A concave-round graph is (at least) one of circular interval and co-bipartite \cite{tuckerMatrixCharacterizationsCirculararc1971,safeCharacterizationLineartimeDetection2020}. In turn, co-bipartite graphs (i.e.\@ graphs with a partition into two cliques) are clearly FLIGs: they admit a single-interval model, with each vertex mapped to one endpoint.
\end{proof}

\paragraph{Interval graphs and interval digraphs.}
There are some results on kernel problems in \emph{interval digraphs} \cite{francisKernelRelatedProblems2021a}. However, said interval digraphs are quite different from arbitrarily-directed interval graphs: interval graphs are chordal, whereas every directed cycle is an interval digraph in the sense of \citeauthor{francisKernelRelatedProblems2021a}. Hence there is little overlap with our results.

\section*{Acknowledgements}
\addcontentsline{toc}{section}{Acknowledgements}
Thanks to F.\@ Meunier for drawing my interest to kernels a few years ago and to Y.\@ Yudistky for her comments on an earlier draft.

 Research supported by the European Research Council under the European Union’s Horizon 2020 research and innovation programme (Grant agreement No. 678765) and by the Israel Science Foundation under Grants 1065/20 and 2891/21.

\phantomsection
\printbibliography[heading=bibintoc]
\end{document}